\theoremstyle{plain}
\newtheorem{thm}{\protect\theoremname}
\theoremstyle{plain}
\theoremstyle{plain}
\newtheorem{cor}[thm]{\protect\corollaryname}
\theoremstyle{plain}
\newtheorem{lem}[thm]{\protect\lemmaname}
\theoremstyle{definition}
\newtheorem{example}[thm]{\protect\examplename}
\theoremstyle{definition}
\newtheorem{defn}[thm]{\protect\definitionname}
\theoremstyle{plain}
\newtheorem{rem}[thm]{\protect\remarkname}
  \providecommand{\corollaryname}{Corollary}
  \providecommand{\examplename}{Example}
  \providecommand{\lemmaname}{Lemma}
  \providecommand{\propositionname}{Proposition}
  \providecommand{\theoremname}{Theorem}
  \providecommand{\definitionname}{Definition}
  \providecommand{\remarkname}{Remark}
\newcommand{\Tr}{\operatorname{Tr}}
\newcommand{\floor}[1]{\left\lfloor {#1} \right\rfloor}
\newcommand{\Bk}[1]{{\Big( {#1} \Big) }}
\numberwithin{equation}{section}
\begin{document}

\title{Complete Weight Distribution and MacWilliams Identities for Asymmetric Quantum Codes}

\author{Chuangqiang Hu, Shudi Yang and Stephen S.-T. Yau     
\thanks{C. Hu's affiliation is Yau Mathematical Sciences Center, Tsinghua University, Beijing 100084, P. R. China.\protect\\
	S. Yang's affiliation is School of Mathematical
	Sciences, Qufu Normal University, Shandong 273165, P.R.China. 	\protect\\
	S. S.-T. Yau's affiliation is Department of Mathematical Sciences, Tsinghua University, Beijing 100084, P. R. China. 	\protect\\

	E-mail: huchq@mail2.sysu.edu.cn, yangshudi7902@126.com, yau@uic.edu }
\thanks{Manuscript received *********; revised ********.}
}

\maketitle

\begin{abstract}
In 1997, Shor and Laflamme defined the weight enumerators for quantum error-correcting codes and derived a MacWilliams identity.  We extend their work by introducing our double weight enumerators and complete weight enumerators. The MacWilliams identities for these enumerators can be obtained similarly. With the help of MacWilliams identities, we obtain various bounds for asymmetric quantum codes.
\end{abstract}

\begin{IEEEkeywords}
MacWilliams identities,  Asymmetric Quantum codes, Quantum Singleton bound.
\end{IEEEkeywords}

\IEEEpeerreviewmaketitle

\section{Introduction}

Quantum information theory is rapidly becoming a well-established
discipline. It shares many of the concepts of classical information
theory but involves new subtleties arising from the nature of quantum
mechanics. Among the central concepts in common between classical
and quantum information is that of error correction. Quantum error-correcting
codes have been initially discovered by Shor \cite{Shor1995} and Steane \cite{Steane1996,Steane1996-2}
in 1995-1996 for the purpose of protecting quantum information from noise in computation or communication \cite{Ekert1996,Knill1997,Nielsen2000,Calderbank1996,Ashikhmin2000,Ashikhmin2000_2}.
 This discovery of \cite{Shor1995} has revolutionized the field of quantum information and leads to a new research line.
 In \cite{Durdevich2000,Zanardi1997,Zanardi1997-2} noiseless quantum codes were built using group theoretic methods \cite{Xu2015,Zhang2017}.
In \cite{Bennett1996,Knill1997} quantum error correction was used to broader analyses of the physical principles. The authors in \cite{Calderbank1998,FengXing2008,WangXing2010} gave various new constructions of quantum error-correcting codes.

It is well known that if further information about the error process is available, more efficient codes can be designed. Indeed, in many physical systems, the noise is likely to be unbalanced between amplitude (X-type) errors and phase (Z-type) errors. Recently a lot of attention has been put into designing codes for this situation and into studying their fault tolerance properties \cite{Ioffe2007,WangXing2010,Sarvepalli2008,Fletcher2008Channel}. All these results use error models described by Kraus operators \cite{Kraus1983} that generalize Pauli operators.

  In the classical theory the famous MacWilliams identity gives a relationship between the weight distributions of a code $ C $ and its dual code $ C^{\perp} $ without knowing specifically the
  codewords of $ C^{\perp} $ or anything else about its structure \cite{MacW1963,MacWilliams1977}.
  The same technique was adapted to the quantum case by Shor and Laflamme \cite{ShorLa1997} generalizing the classical case and they derived a MacWilliams identity. Rains \cite{Rain1998} investigated the properties of quantum enumerators. In \cite{Rain1999}, Rains extended the work of \cite{ShorLa1997} to general codes by introducing quantum shadow enumerators.

 Several bounds are known for classical error-correcting codes. Delsarte \cite{Delsarte1973} investigated the Singleton and Hamming bounds using linear programming approach. The first linear programming bound was generalized by Aaltonen \cite{Aaltonen1979} to the nonbinary case. See \cite{Aaltonen1990,Laihonen1998,Levenshtein1995} for more information on available bounds for non-binary codes. Recently there has been intensive activity in the area of quantum codes. In particular, Knill and Laflamme \cite{Knill1997A} introduced the notion of the minimum distance of a quantum error-correcting code and showed
 that the error for entangled states is bounded linearly by the error for
 pure states. Shor and Laflamme \cite{ShorLa1997} presented a
 linear-programming bound for quantum error-correcting codes. Cleve \cite{Cleve1997Quantum} demonstrated connections between quantum stabilizer codes and classical codes and gave upper bounds on the best asymptotic capacity. Rains \cite{Rain1998,Rain1999} showed that the minimum distance of a quantum code is determined by its enumerators. Ashikhmin and Litsyn \cite{Ashikhmin1999} attained upper asymptotic bounds on the size of quantum codes.
 Aly \cite{Aly2008} established asymmetric Singleton
 and Hamming bounds on asymmetric quantum and subsystem
 code parameters. Sarvepalli \emph{et al.} \cite{Sarvepalli2009} studied asymmetric quantum codes and derived upper bounds on the code parameters using linear programming.
 Wang \emph{et al.} \cite{WangXing2010} extended the characterization
 of non-additive symmetric quantum codes given in \cite{FengXing2006,FengXing2008} to the asymmetric case and obtained an asymptotic bound from algebraic geometry codes.

It should be mentioned that there is another weight enumerator for a classical code that contain more detailed information about
the codewords. Namely, the complete weight enumerator, which enumerates the codewords according to the number of alphabets of each kind contained in each codeword. MacWilliams \cite{MacW1963,MacWilliams1977} also
proved that there is an identity between the complete weight enumerators of $ C $ and its dual code $ C^{\perp} $. The complete weight enumerator and weight enumerator of classical codes have been studied extensively, see \cite{vega2012weight,ding2009weight,liYue2014weight,yang2015weightenu,Yang2017constru,Yang2016complete,yang2015weight} and the reference therein.
However, to the best of our knowledge, there is no quantum analog complete weight enumerators as in classical coding theory. Therefore the purpose of the present paper is to introduce the notions of double weight enumerator and complete weight enumerator and then generalize the MacWilliams identities about complete weight enumerators from classical coding theory to the quantum case.
Using the generalized MacWilliams identities, we will find new upper bounds on the minimum distance of asymmetric quantum codes.

Here is the plan of the rest of this paper. In Section \ref{sec:sym}, we introduce some basic definitions and notations on symmetric and asymmetric quantum codes. In Section \ref{sec:enu}, we establish our main result on quantum MacWilliams identities by defining double weight enumerators and complete weight enumerators of quantum codes. In Section \ref{sec:Kraw}, we give a short survey of properties of the Krawtchouk polynomials and we prove the key inequality that allows us to get new upper bounds of the minimum distance of an asymmetric quantum codes.
In Section \ref{sec:upper}, we apply the key inequality to obtain Singleton-type, Hamming-type and the first linear-programming-type bounds for asymmetric quantum codes.

\section{symmetric and asymmetric quantum codes}\label{sec:sym}
We begin with some basic definitions and notations. Let $ \mathbb{C} $ be a complex number field. We regard $ \mathbb{C}^2 $ as a Hilbert space with orthonormal basis $ |0 \rangle $ and $ | 1 \rangle  $. Denote by $ (\mathbb{C}^2)^{\otimes n } = \mathbb{C}^{2^n } $ the $ n $-th tensor of $ \mathbb{C}^2 $. This space enables us to transmit $ n $ bits of information. The coordinate basis is given by
\[ |j \rangle = | j_0 \rangle \otimes | j_1 \rangle  \otimes \ldots \otimes | j_{n-1} \rangle , \text{for each } j_r \in \{0, 1\}. \]
For two quantum states $ | u \rangle $ and $ |v \rangle $ in $ \mathbb{C}^{2^n} $ with
\[|u\rangle = \sum_{j} u_{j} | j \rangle, \quad  |v\rangle = \sum_{j} v_{j} | j \rangle,  \]
 the Hermitian inner product of $ | u\rangle $ and $ | v\rangle
 $ is defined by \[ \langle u | v \rangle = \sum_{j} \bar {u}_j v_j. \]

In the process of transmission over a channel the information can be altered by errors. There are several models of channels. Perhaps the most popular one is the completely depolarized channel, among which a vector $ v \in \mathbb{C}^2 $ can be altered by one of the following error operators:
\[
\sigma_{x}= \left[\begin{matrix*}[c]
0 &  1 \\
1 & 0
\end{matrix*}\right],
\sigma_{y}=\left[\begin{matrix*}[c]
0 &  -i \\
i & 0
\end{matrix*}\right]  ,
\sigma_{z}=\left[\begin{matrix*}[c]
1 &  0\\
0 & -1
\end{matrix*}\right]  .
\]
The error operators action on $ \mathbb{C}^{2} $ constitute a set
\[  E := \{e = \sigma_{0} \otimes \ldots \otimes \sigma_{n-1} | \sigma_r  \in \{ I, \sigma_{x} , \sigma_{y} , \sigma_{z} \}  \} . \]
For $ e \in E $, the number of non-identity matrices in expression of $ e $ is called the weight of $ e $ which is denoted by $  w_{Q} (e) $. Similarly, we denote by $ N_{x}(e)  $, $ N_{y}(e) $ and $ N_{z}(e)  $ the number of the matrices $ \sigma_{x} $, $ \sigma_{y} $ and $ \sigma_{z} $ occurred in the expression of $ e $ respectively. It is clear that
 \[w_{Q} (e) = N_{x}(e) + N_{y}(e)+N_{z}(e) . \]
 It is well known that each $ e \in E $ is the composition of two kinds of error operators, i.e. the bit flip and the phase flip. Precisely,  fix an error operator $ e $, there exist vectors $ a = ( a_0 \ldots a_{n-1}) \in \mathbb{F}_2^ n $ and $ b = ( b_0 \ldots b_{n-1})\in \mathbb{F}_2^ n $ such that
\begin{equation}\label{eq:e=XZ}
 e = i ^{a \cdot b } X(a) Z(b),
\end{equation}
where
\[ X(a) = \omega_{0} \otimes \ldots\otimes \omega_{n-1}, \quad  Z(b) = \omega_{0}^{\prime} \otimes \ldots\otimes \omega_{n-1}^{\prime} , \]
and
\[\omega_{i} = \begin{cases}
	I, & \text{ if }a_i = 0 ,\\
	\sigma_{x} &\text{ if } a_i = 1,
\end{cases}
\quad
\omega_{i}^{\prime} = \begin{cases}
I, &\text{ if } b_i = 0 ,\\
\sigma_{z} &\text{ if } b_i = 1.
\end{cases}
\]
We define the $ X $-weight $ w_{X}(e)  $ and the $ Z $-weight $ w_{Z}(e)  $ to be the Hamming weights of $ a $ and $ b $ of Equation (\ref{eq:e=XZ}) respectively. In fact, they alternatively can be defined as
\[w_{X}(e) = N_{x}(e) + N_{y}(e), \quad w_{Z}(e) = N_{y}(e) + N_{z}(e). \]
In the following section we want to investigate some natural partitions of the set $ E $, so we define
\begin{align*}
	E[i,j,k] &:= \{ e \in E |  N_{x}(e)= i ,   N_{y}(e) = j  ,  N_{z}(e)= k \},\\
 E[i,j] &:= \{ e \in E |  w_{X}(e)= i ,  w_{Z}(e) = j  \},\\
E[i] &:= \{ e \in E |   w_{Q}(e)=i  \}.
\end{align*}
\begin{defn}
 A quantum code of length $ n $ is a linear subspace of $ \mathbb{C}^{2^n} $ with dimension $K \geqslant 1$. Such a quantum code can be denoted as $ ((n,K)) $ code or $ [[n,k]] $ code, where $ k = \log K $.
\end{defn}
\begin{rem}
	Here and thereafter, the logarithms are base 2.
\end{rem}
\begin{defn}
	Let $ Q $ be a quantum code. An error $ e $ in $ E $ is called detectable if
	\[ \langle v |e | w \rangle = 0  \]
	for all orthogonal vectors $ v $ and $ w $ from the code $ Q $.
\end{defn}

 Denote by $ P$ the orthogonal projection from $ C^{2^n } $ onto a quantum code $Q $. We have an alternative definition for detectable errors. It is deduced from \cite{Ashikhmin1999} that $ e $ is detectable if and only if
\[ Pe P = \lambda_e P  \]
for a constant $ \lambda_e $ depending on $ e $.

\begin{defn}
		Let $ Q $ be a quantum code with parameters $ ((n,K))$. The minimum distance of $Q $ is the maximum integer $ d $ such that  any error $ e  \in E [ i ] $ with $ i  < d $ is detectable. Such a quantum code is called a symmetric quantum code with parameters $ ((n,K, d )) $ or $ [[n,k,d]] $. If the integers $ d_x $ and $ d_z $ are the maximum integers such that each error $ e  \in E [ i ,j ] $ with $ i  < d_x , j < d_z $ is detectable, then we call $ Q $ an asymmetric quantum code with parameters $ ((n,K, d_z/ d_x ))$ or $ [[n,k,d_z/ d_x]] $.
\end{defn}
The classical Singleton bound can be extended to quantum codes.
\begin{thm}
Let $ Q $ be a quantum code with parameters $ [[n,k,d]]$. We have
\[  n \geqslant k + 2d -2. \]
\end{thm}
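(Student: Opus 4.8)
The plan is to mimic the classical Singleton bound argument, transplanted to the quantum setting via the characterization of detectable errors in terms of the projection $P$. Recall that $Q$ has parameters $[[n,k,d]]$ means $K = 2^k$ and every error $e \in E$ with $w_Q(e) < d$ is detectable, i.e. $PeP = \lambda_e P$. The strategy is to show that if $n < k + 2d - 2$, then $Q$ can correct erasures on too many coordinates, leading to a contradiction with a dimension count; equivalently, one shows that tracing out $n - (d-1)$ of the qubits still leaves enough room to distinguish all $K$ code states, which is impossible once $d-1$ is large relative to $n-k$.

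First I would make precise the erasure-correction reformulation: a quantum code with minimum distance $d$ can correct the erasure of any $d-1$ coordinates, which means that for any subset $S \subseteq \{0,\dots,n-1\}$ with $|S| = d-1$, the restriction of the code to the complementary coordinates $S^c$ is injective on the code space in the appropriate sense — concretely, the partial trace over the qubits in $S$ maps distinct (orthogonal) code states to states that remain distinguishable, because any error supported on $S$ has $w_Q \le d-1 < d$ and is hence detectable, giving $PeP = \lambda_e P$. Second, I would iterate this: erasing a first block of $d-1$ coordinates and then a second block of $d-1$ coordinates (possible as long as $2(d-1) \le n$, which we may assume since otherwise the bound $n \ge k + 2d-2 \ge 2d-2$ would need separate trivial handling), one concludes that the $K$-dimensional code information survives inside a Hilbert space attached to only $n - 2(d-1)$ qubits. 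Third, a dimension count forces $K \le 2^{\,n - 2(d-1)}$, i.e. $k \le n - 2d + 2$, which rearranges to $n \ge k + 2d - 2$.

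The cleanest way to run the middle step rigorously is via the standard no-cloning / subsystem argument: if erasing either of two disjoint sets $A, B$ of size $d-1$ is correctable, then the reference system purifying the encoded $k$ logical qubits is simultaneously recoverable from $A^c$ and from $B^c$; since $A^c \cup B^c$ can miss at most... — more carefully, the logical information is recoverable from $(A\cup B)^c$, which has $n - 2(d-1)$ qubits, so the coherent information inequality $k \le \log \dim \mathcal{H}_{(A\cup B)^c} = n - 2(d-1)$ applies. I would invoke detectability ($PeP = \lambda_e P$ for all $e$ of weight $< d$) precisely to certify correctability of these erasures, using the Knill–Laflamme conditions implicit in the detectability definition given earlier.

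The main obstacle I expect is making the "erase two disjoint blocks" step fully rigorous without circular appeal to quantum Singleton itself: one must carefully argue that correctability of erasures on $A$ and on $B$ separately implies correctability on $A \cup B$, which is \emph{not} generic but does follow here because the relevant errors on $A \cup B$ still have weight $2(d-1)$ — too large — so instead one should argue at the level of reduced density operators and the monotonicity of distinguishability under partial trace, or else phrase the whole thing as: the map $\rho \mapsto \operatorname{Tr}_{A}\rho$ is injective on $P\,\mathrm{End}(\mathcal H)\,P$ when restricted appropriately, and likewise for $B$, and compose. I would handle this by following Rains' or Ashikhmin–Litsyn's treatment of the quantum Singleton bound via shortening of quantum codes, where shortening a $[[n,k,d]]$ code on one coordinate yields an $[[n-1, k-1, d-1]]$-type object (or better), and iterating the shortening $2(d-1)$ times — but one must be cautious that shortening decreases $k$ by $1$ and $d$ by $1$ per step only under the right hypotheses, so the bookkeeping there is the delicate part.
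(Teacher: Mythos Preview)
The paper does not actually prove this symmetric statement: it is recorded in Section~\ref{sec:sym} as known background with no argument. What the paper \emph{does} prove, in Section~\ref{sec:upper}, is the asymmetric bound $n\ge k+d_x+d_z-2$, and the method there is purely algebraic: one feeds the test polynomial
\[
f(x,y)=2^{2n-d_x-d_z+2}\prod_{r=d_x}^n\Bigl(1-\frac{x}{r}\Bigr)\prod_{s=d_z}^n\Bigl(1-\frac{y}{s}\Bigr)
\]
into the linear-programming key inequality (Lemma~\ref{lem:K}), checks that its Krawtchouk coefficients are nonnegative binomials, and reads off $K\le 2^{n-d_x-d_z+2}$. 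The same template, run with the single-variable enumerator $B$ and the Shor--Laflamme identity~\eqref{eq:MacB} as in~\cite{Ashikhmin1999}, gives the symmetric bound. Your erasure/no-cloning route is therefore genuinely different: it is operational and bypasses enumerators entirely, whereas the paper's route is combinatorial and in keeping with its MacWilliams theme.

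On the step you rightly flag as delicate: recoverability of the reference from $A^c$ and from $B^c$ separately does \emph{not} yield recoverability from $(A\cup B)^c$, and your ``compose two partial-trace injectivities'' suggestion fails as written---after tracing out $A$ the resulting operators no longer sit in the code algebra where the second injectivity applies. The clean repair is the entropy argument: with a purifying reference $R$ and a partition $A\sqcup B\sqcup C$ with $|A|=|B|=d-1$, decoupling of $R$ from $A$ gives $S(BC)=k+S(A)$; subadditivity then gives $k+S(A)\le S(B)+S(C)$, and adding the symmetric inequality obtained from decoupling of $R$ from $B$ yields $k\le S(C)\le |C|=n-2(d-1)$. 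This replaces your unproven recoverability-from-$C$ claim and closes the argument.
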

In \cite{Aly2008,WangXing2010}, the authors have proved the Singleton bound for stabilizer asymmetric quantum codes. That is
\[
  n \geqslant k + d_x+ d_z -2.
\]
However, we can not find the proof of Singleton bound for general quantum codes.

\section{weight distributions and enumerators}\label{sec:enu}
The weight distribution for classical codes can be generalized to the case of quantum codes. According to \cite{ShorLa1997},
the weight distribution for quantum codes is defined by the following two sequences of numbers
\begin{align*}
B_i& = \frac{1}{K^2} \sum_{e \in E[i] } \Tr^2(e P ), \\
B_i^{\bot} & = \frac{1}{K} \sum_{e \in E[i] } \Tr  (e P e P ).
\end{align*}
Moreover, the corresponding weight enumerator is defined to be the following two bivariate polynomials
\begin{align*}
 B(X, Y) &:= \sum_{i=0}^{n} B_i X^{n-i} Y^i ,\\
 B^{\bot}(X, Y) &:= \sum_{i=0}^{n} B_i^{\bot} X^{n-i} Y^i .
 \end{align*}
In a similar manner, we introduce the double weight distribution
\begin{align*}
C_{i,j} &= \frac{1}{K^2} \sum_{e \in E[i,j] } \Tr^2(e P ), \\
C_{i,j}^{\bot}& = \frac{1}{K} \sum_{e \in E[i,j] } \Tr (e P e P),
\end{align*}
and the complete weight distribution
 \begin{align*}
 D_{i,j,k} &= \frac{1}{K^2} \sum_{e \in E[i,j,k] } \Tr^2(e P ), \\
 D_{i,j,k}^{\bot} &= \frac{1}{K} \sum_{e \in E[i,j,k] } \Tr (e P e P ).
 \end{align*}
Then the double weight enumerator and the complete weight enumerator are defined by
\begin{align*}
  C(X,Y,Z,W)& := \sum_{i,j = 0}^{n}   C_{i,j }X^{n - i} Y^ i  Z^{n - j} W^ j ,\\
  C^{\bot}(X,Y,Z,W) &:= \sum_{i,j = 0}^{n}   C_{i,j }^{\bot} X^{n - i} Y^ i  Z^{n - j} W^ j,\\
  D(X,Y,Z,W) &:= \sum_{  i+j+k \leqslant n }  D_{i,j,k} X^i Y^j Z^k W^{n - i - j - k}, \\
 D^{\bot}(X,Y,Z,W) &:= \sum_{  i+j+k \leqslant n }  D_{i,j,k}^{\bot} X^i Y^j Z^k W^{n - i - j - k}  .
 \end{align*}

These enumerators are related by the following theorem.
\begin{thm}\label{thm:BandD}
	 Let $ Q $ be a quantum code with enumerators $ B,B^{\bot},C,C^{\bot},D$ and $D^{\bot} $. Then the following four identities hold:
 \begin{equation}\label{eq:B=D}
 	B(X,Y) = D( Y,Y,Y,X),
 \end{equation}
  \begin{equation}\label{eq:B2=D2}
B^{\bot}(X,Y) = D^{\bot}( Y,Y,Y,X),
 \end{equation}
  \begin{equation}\label{eq:C=D}
C(X,Y,Z,W) = D(YZ,YW,XW,XZ),
\end{equation}
  \begin{equation}\label{eq:C2=D2}
 C^{\bot}(X,Y,Z, W) = D^{\bot}(YZ,YW,XW,XZ).
\end{equation}
\end{thm}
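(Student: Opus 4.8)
The plan is to prove all four identities by a single bookkeeping argument at the level of the defining sums, rather than manipulating the polynomials after the fact. The essential point is that the three enumerators $B$, $C$, $D$ are all built from the \emph{same} coefficients $\Tr^2(eP)/K^2$ (resp.\ $\Tr(ePeP)/K$ for the dual versions), merely grouped according to coarser or finer partitions of the error set $E$. Since
\[
E[i] = \bigsqcup_{j+k \le i} E[i-j-k,\,j,\,k], \qquad
E[i,j] = \bigsqcup_{\substack{N_x+N_y = i\\ N_y+N_z = j}} E[N_x,N_y,N_z],
\]
the refinements $D_{i,j,k} \mapsto C_{i,j}$ and $D_{i,j,k}\mapsto B_i$ are obtained by summing the complete weight distribution over the appropriate fibers. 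So the whole theorem reduces to checking that the monomial substitutions on the right-hand sides reproduce exactly these fiberings.

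First I would prove \eqref{eq:C=D}. Starting from $D(X,Y,Z,W) = \sum_{i+j+k\le n} D_{i,j,k} X^i Y^j Z^k W^{n-i-j-k}$, substitute $(X,Y,Z,W) \mapsto (YZ, YW, XW, XZ)$; a term indexed by $(i,j,k)$ with $m := n-i-j-k$ becomes
\[
D_{i,j,k}\,(YZ)^i (YW)^j (XW)^k (XZ)^m
 = D_{i,j,k}\, X^{k+m}\, Y^{i+j}\, Z^{i+m}\, W^{j+k}.
\]
Recalling $w_X(e) = N_x(e)+N_y(e)$ and $w_Z(e) = N_y(e)+N_z(e)$, and that for $e \in E[i,j,k]$ the complementary index $m$ equals $n - w_Q(e) = N_I(e)$, one reads off exponents $X^{n-w_X}, Y^{w_X}, Z^{n-w_Z}, W^{w_Z}$ with $i + j = w_X$ and $j + k = w_Z$. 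Hence collecting terms with fixed $(w_X,w_Z) = (a,b)$ gives the coefficient $\sum_{\substack{i+j=a,\,j+k=b}} D_{i,j,k} = C_{a,b}$ by definition, which is precisely the statement $C(X,Y,Z,W) = D(YZ,YW,XW,XZ)$. The identity \eqref{eq:C2=D2} is the verbatim same computation with $\Tr^2(eP)/K^2$ replaced by $\Tr(ePeP)/K$ throughout, since nothing in the argument used the specific form of the coefficient — only that $C$ and $D$ use the same one.

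For \eqref{eq:B=D}, substitute $(X,Y,Z,W)\mapsto(Y,Y,Y,X)$ in $D$: the term indexed by $(i,j,k)$ becomes $D_{i,j,k}\, Y^{i+j+k} X^{n-i-j-k} = D_{i,j,k}\, Y^{w_Q(e)} X^{n - w_Q(e)}$ for $e\in E[i,j,k]$, and summing over all $(i,j,k)$ with $i+j+k = \ell$ collapses the coefficient to $\sum_{j+k\le \ell} D_{\ell-j-k,\,j,\,k} = B_\ell$, giving $B(X,Y) = D(Y,Y,Y,X)$; \eqref{eq:B2=D2} is again the identical computation on the dual side. I do not anticipate a genuine obstacle here — the only thing to be careful about is the convention for which variable tracks the identity component (the $W$ slot in $D$, the $X$ slot in $B$ and $C$) and the two different exponent conventions in the definitions of $D$ versus $B,C$, so the ``main difficulty'' is purely notational: making sure the substitution $(YZ,YW,XW,XZ)$ is matched against the exponent pattern $X^{n-i}Y^iZ^{n-j}W^j$ of $C$ with the indices read off in the right order. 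Once the exponent arithmetic $\{i+k+m,\ i+j,\ i+m,\ j+k\} = \{n-w_X,\ w_X,\ n-w_Z,\ w_Z\}$ is verified (using $i+j+k+m=n$), all four identities follow at once.
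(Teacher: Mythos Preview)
Your proof is correct and takes a more direct route than the paper's. You argue purely combinatorially: since $B_i$, $C_{i,j}$, and $D_{i,j,k}$ are sums of the \emph{same} quantity $\Tr^2(eP)/K^2$ (resp.\ $\Tr(ePeP)/K$) over nested partitions $E[i]\supseteq E[i,j]\supseteq E[i,j,k]$ of the error set, the identities reduce to checking that the indicated variable substitutions implement the corresponding regrouping of monomials, which is the exponent arithmetic you carry out. The paper instead expands each enumerator in terms of the matrix entries of $P$ and of the Pauli operators, obtaining a per-qubit product form
\[
D(X,Y,Z,W)=\frac{1}{K^2}\sum_{r,s,t,u}p_{s,r}p_{u,t}\prod_{\lambda=0}^{n-1}d_\lambda(X,Y,Z,W)
\]
(and analogous factors $c_\lambda$, $b_\lambda$ for $C$ and $B$), and then verifies the substitution at the level of the single-qubit factor, e.g.\ $d_\lambda(YZ,YW,XW,XZ)=c_\lambda(X,Y,Z,W)$. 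Your approach is shorter and more transparent for this theorem; the paper's approach, while heavier here, is a deliberate warm-up for the MacWilliams identity (Theorem~\ref{thm:quaMac}), where one must compare $d_\lambda$ with the dual factor $d_\lambda^\bot$ arising from $\Tr(ePeP)$ --- a comparison for which your regrouping argument is not available, since the primal and dual enumerators no longer share the same underlying summands. One small slip: in your final summary line the first exponent should be $k+m$, not $i+k+m$; your main displayed computation has it correctly.
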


\begin{proof}
	We shall investigate the explicit expressions for these enumerators. Recall that
the coordinate basis for $ \mathbb{C}^{2^n } $ is given by
\[  |j \rangle = |j _0 \rangle \otimes \ldots \otimes |j _{n-1}  \rangle ,\]
where $ 0 \leqslant j < 2^n$ and $j = j_0 + j_1 2+\ldots + j_{n-1} 2^{n-1}$.
Denote by $ e_{i,j}, p_{i,j} $ the entries of $ e $ and $ P $ respectively with respect to our coordinate basis.
For  $e = \sigma_{0} \otimes \ldots \otimes \sigma_{n-1} \in E $,  the identity $  e |i\rangle = \sum _{j } e_{i,j}| j\rangle $ implies
\[ e_{i,j} = (\sigma_{0})_{i_0 , j_0}(\sigma_{1})_{i_1 , j_1}\ldots (\sigma_{n-1})_{i_{n-1}, j_{n-1}} .\]
According to the definition of $D$, we get
\begin{align*}
  D(X,Y,Z,W) &=  \sum_{i,j,k} X^i Y^j Z^k W^{n-i-j-k} D_{i,j,k}\\
  & = \frac{1}{K^2} \sum_{i,j,k} X^i Y^j Z^k W^{n-i-j-k}\sum_{e \in E[i,j,k]} \sum_{r,s,t,u} e_{r,s} p_{s,r} e_{t,u} p_{u,t} \\
   & = \frac{1}{K^2} \sum_{r,s,t,u}p_{s,r}  p_{u,t} \sum_{e \in E }    X^{N_x(e)} Y^{N_y(e)} Z^{N_z(e)} W^{n-w_{Q}(e)}e_{r,s} e_{t,u}  \\
   &= \frac{1}{K^2} \sum_{r,s,t,u}p_{s,r}  p_{u,t} \prod_{\lambda=0}^{n-1} d_{\lambda}(X,Y,Z,W),
\end{align*}
where \[ d_\lambda(X,Y,Z,W) =(\sigma_{x})_{r_\lambda,s_\lambda}(\sigma_{x})_{t_\lambda,u_\lambda}X+
(\sigma_{y})_{r_\lambda,s_\lambda}(\sigma_{y})_{t_\lambda,u_\lambda}Y+
(\sigma_{z})_{r_\lambda,s_\lambda}(\sigma_{z})_{t_\lambda,u_\lambda}Z+ (I)_{r_\lambda,s_\lambda}(I)_{t_\lambda,u_\lambda}W . \]
By the same method, we have
\begin{align*}
C(X,Y,Z,W) &=  \sum_{i,j} X^{n-i} Y^i Z^{n-j} W^{j} C_{i,j}\\
& =  \frac{1}{K^2} \sum_{i,j} X^{n-i} Y^i Z^{n-j} W^{j}  \sum_{e \in E[i,j]} \sum_{r,s,t,u} e_{r,s} p_{s,r} e_{t,u} p_{u,t} \\
& = \frac{1}{K^2} \sum_{r,s,t,u}p_{s,r}  p_{u,t} \sum_{e \in E }    X^{n-w_X(e)} Y^{w_X(e)} Z^{n-w_Z(e)} W^{w_{Z}(e)}e_{r,s} e_{t,u}  \\
&= \frac{1}{K^2} \sum_{r,s,t,u}p_{s,r}  p_{u,t} \prod_{\lambda=0}^{n-1} c_{\lambda}(X,Y,Z,W),
\end{align*}
where \[ c_\lambda(X,Y,Z,W) =(\sigma_{x})_{r_\lambda,s_\lambda}(\sigma_{x})_{t_\lambda,u_\lambda}YZ+
(\sigma_{y})_{r_\lambda,s_\lambda}(\sigma_{y})_{t_\lambda,u_\lambda}YW+
(\sigma_{z})_{r_\lambda,s_\lambda}(\sigma_{z})_{t_\lambda,u_\lambda}XW+ (I)_{r_\lambda,s_\lambda}(I)_{t_\lambda,u_\lambda}XZ . \]

Moreover, we obtain
\begin{align*}
B(X,Y) &=  \sum_{i,j} X^{n-i} Y^i B_{i}\\
& =  \frac{1}{K^2} \sum_{i,j} X^{n-i} Y^i   \sum_{e \in E[i]} \sum_{r,s,t,u} e_{r,s} p_{s,r} e_{t,u} p_{u,t} \\
& = \frac{1}{K^2} \sum_{r,s,t,u}p_{s,r}  p_{u,t} \sum_{e \in E }    X^{n-w_Q(e)} Y^{w_Q(e)}  e_{r,s} e_{t,u}  \\
&= \frac{1}{K^2} \sum_{r,s,t,u}p_{s,r}  p_{u,t} \prod_{\lambda=0}^{n-1} b_{\lambda}(X,Y),
\end{align*}
where \[ b_\lambda(X,Y) =(\sigma_{x})_{r_\lambda,s_\lambda}(\sigma_{x})_{t_\lambda,u_\lambda}Y+
(\sigma_{y})_{r_\lambda,s_\lambda}(\sigma_{y})_{t_\lambda,u_\lambda}Y+
(\sigma_{z})_{r_\lambda,s_\lambda}(\sigma_{z})_{t_\lambda,u_\lambda}Y+ (I)_{r_\lambda,s_\lambda}(I)_{t_\lambda,u_\lambda}X . \]
So Equations (\ref{eq:B=D}) and (\ref{eq:C=D}) are deduced from
\[ d_{\lambda} (YZ, YW, XW, XZ) = c_{\lambda}(X,Y,Z,W),  \]
and
\[ d_{\lambda} (Y, Y, Y, X) = b_{\lambda}(X,Y).  \]
Applying the same argument, Equations (\ref{eq:B2=D2}) and (\ref{eq:C2=D2}) follow immediately.
\end{proof}
The classical MacWilliams identity provides a relationship between classical linear codes and their dual codes.
It is interesting to see that the MacWilliams identity also holds for quantum weight enumerators \cite{ShorLa1997}. That is
	\begin{align} \label{eq:MacB}
		B(X,Y) = \frac{1}{K} B^{\bot} (\frac{X+3 Y}{2},  \frac{X-Y}{2}) .
		\end{align}
			Our main result is to show that quaternary MacWilliams identities for the double weight enumerator and complete weight enumerator hold similarly.
\begin{thm}[quaternary MacWilliams identities]\label{thm:quaMac}
	 Let $ Q $ be an $ ((n,K)) $ quantum code. With the notation introduced above, we have
\begin{align}
  C(X,Y,Z,W) &= \frac{1}{K} C^{\bot}(Z+W, Z-W, \frac{X+Y}{2}, \frac{X-Y}{2}) ,\label{eq:C=D1}\\
  D(X,Y,Z,W)& = \frac{1}{K} D^{\bot}(\frac{X- Y- Z+W }{2}, \frac{ -X+ Y- Z+W }{2},\frac{-X- Y+ Z+W }{2}, \frac{X+ Y+ Z+W }{2} ) .  \label{eq:D=D1}
\end{align}		
\end{thm}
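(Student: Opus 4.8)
The plan is to diagonalise the quantities $\Tr^{2}(eP)$ and $\Tr(ePeP)$ simultaneously by expanding the projector $P$ in the Pauli basis, and then to reduce each of the two identities to a one-qubit linear computation. The structural facts I would use are: the $4^{n}$ operators $e\in E$ are Hermitian and satisfy $\Tr(ef)=2^{n}\delta_{e,f}$, so they form an orthogonal basis of the space of $2^{n}\times 2^{n}$ matrices; and any two of them either commute or anticommute, so, setting $\epsilon(e,f)=+1$ in the first case and $-1$ in the second, one has $efe=\epsilon(e,f)f$ (using $e^{2}=I$), and, since commutation is decided slot by slot, $\epsilon(e,f)=\prod_{\lambda=0}^{n-1}\epsilon(e_{\lambda},f_{\lambda})$, where $e_{\lambda}$ is the $\lambda$-th tensor factor of $e=e_{0}\otimes\ldots\otimes e_{n-1}$.

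Writing $c_{e}:=\Tr(eP)$ (real, as $P$ and $e$ are Hermitian), so that $P=2^{-n}\sum_{e\in E}c_{e}\,e$, I would first record the spectral form of the distributions. Trivially $\Tr^{2}(eP)=c_{e}^{2}$; expanding $P$ twice and using $\Tr(fg)=2^{n}\delta_{f,g}$ gives $\Tr(ePeP)=2^{-n}\sum_{f\in E}\epsilon(e,f)c_{f}^{2}$. Substituting these into the definitions,
\[
D(X,Y,Z,W)=\frac{1}{K^{2}}\sum_{e\in E}c_{e}^{2}\,X^{N_{x}(e)}Y^{N_{y}(e)}Z^{N_{z}(e)}W^{n-w_{Q}(e)},
\]
\[
D^{\bot}(X,Y,Z,W)=\frac{1}{K\,2^{n}}\sum_{f\in E}c_{f}^{2}\sum_{e\in E}\epsilon(e,f)\,X^{N_{x}(e)}Y^{N_{y}(e)}Z^{N_{z}(e)}W^{n-w_{Q}(e)}.
\]
Because both the monomial attached to $e$ and the sign $\epsilon(e,f)$ are products over the $n$ coordinates, the inner sum over $e$ factors as $\prod_{\lambda=0}^{n-1}h_{f_{\lambda}}(X,Y,Z,W)$, where, putting $g(\sigma_{x})=X$, $g(\sigma_{y})=Y$, $g(\sigma_{z})=Z$, $g(I)=W$, we set $h_{\tau}(X,Y,Z,W):=\sum_{\sigma\in\{I,\sigma_{x},\sigma_{y},\sigma_{z}\}}\epsilon(\sigma,\tau)\,g(\sigma)$.

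With this, (\ref{eq:D=D1}) reduces to the one-qubit statement: if $X',Y',Z',W'$ denote the four linear forms on the right of (\ref{eq:D=D1}), then $h_{\tau}(X',Y',Z',W')=2\,g(\tau)$ for every $\tau\in\{I,\sigma_{x},\sigma_{y},\sigma_{z}\}$. This is a direct four-case check; conceptually it expresses that the $4\times4$ matrix of the substitution in (\ref{eq:D=D1}) is $\tfrac12$ times a symmetric sign matrix (a reindexing of $H_{2}\otimes H_{2}$) recording single-qubit commutation, whose square is $4$ times the identity. Multiplying over $\lambda$ and summing over $f$ then turns $D^{\bot}(X',Y',Z',W')$ into $\tfrac{1}{K\,2^{n}}\sum_{f}c_{f}^{2}\prod_{\lambda}2\,g(f_{\lambda})=K\,D(X,Y,Z,W)$, which is (\ref{eq:D=D1}). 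I would deduce (\ref{eq:C=D1}) formally from (\ref{eq:D=D1}) together with the substitution relations (\ref{eq:C=D}) and (\ref{eq:C2=D2}) of Theorem \ref{thm:BandD}: one checks that applying (\ref{eq:D=D1}) to $D(YZ,YW,XW,XZ)$ lands on $\tfrac1K D^{\bot}$ evaluated at precisely the arguments $(bc,bd,ac,ad)$ obtained from $(a,b,c,d)=(Z+W,\,Z-W,\,\tfrac{X+Y}{2},\,\tfrac{X-Y}{2})$; alternatively one repeats the one-qubit check verbatim with the monomials $\mu_{I}=XZ$, $\mu_{\sigma_{x}}=YZ$, $\mu_{\sigma_{y}}=YW$, $\mu_{\sigma_{z}}=XW$ already appearing in the proof of Theorem \ref{thm:BandD}. (The known identity (\ref{eq:MacB}) for $B$ drops out of (\ref{eq:D=D1}) the same way via (\ref{eq:B=D}) and (\ref{eq:B2=D2}).)

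The only genuinely non-formal ingredient is the spectral identity $\Tr(ePeP)=2^{-n}\sum_{f}\epsilon(e,f)c_{f}^{2}$, which rests on the orthogonality $\Tr(ef)=2^{n}\delta_{e,f}$, the relation $efe=\epsilon(e,f)f$, and the multiplicativity of $\epsilon$; I expect this, together with keeping track of signs through the four one-qubit cases, to be the main---though still routine---obstacle, everything else being bookkeeping with the generating functions.
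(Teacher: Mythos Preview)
Your argument is correct, and your derivation of (\ref{eq:C=D1}) from (\ref{eq:D=D1}) via the substitution relations (\ref{eq:C=D}), (\ref{eq:C2=D2}) of Theorem~\ref{thm:BandD} is exactly what the paper does. The genuine difference is in how (\ref{eq:D=D1}) is reduced to a single-qubit identity. The paper expands $P$ in the \emph{coordinate} basis, writing both $D$ and $D^{\bot}$ as $\sum_{r,s,t,u}p_{s,r}p_{u,t}\prod_{\lambda}d_{\lambda}$ (respectively $\prod_{\lambda}d_{\lambda}^{\bot}$), where $d_{\lambda}$ and $d_{\lambda}^{\bot}$ are linear forms in $X,Y,Z,W$ whose coefficients are products of Pauli-matrix entries, differing only by the swap $s_{\lambda}\leftrightarrow u_{\lambda}$; the one-qubit step is then the verification $d_{\lambda}(X,Y,Z,W)=d_{\lambda}^{\bot}(X',Y',Z',W')$ over all sixteen choices of $(r_{\lambda},s_{\lambda},t_{\lambda},u_{\lambda})\in\{0,1\}^{4}$. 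You instead expand $P$ in the \emph{Pauli} basis, so that the matrix entries are replaced by the commutation signs $\epsilon(\sigma,\tau)$ and the one-qubit check $h_{\tau}(X',Y',Z',W')=2g(\tau)$ has only four cases. Your route is more conceptual---the involutive substitution in (\ref{eq:D=D1}) is transparently $\tfrac12$ times the $4\times4$ sign table of single-qubit commutation, a reindexed $H_{2}\otimes H_{2}$---at the price of first establishing the spectral identity $\Tr(ePeP)=2^{-n}\sum_{f}\epsilon(e,f)c_{f}^{2}$; the paper's route needs nothing beyond matrix multiplication but obscures this symmetry behind an entrywise verification.
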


\begin{proof}
	We maintain all notations in the proof of the previous theorem. Using the same method, one can show that
	\[ D^{\bot}(X,Y,Z,W) = \frac{1}{K } \sum_{r,s,t,u}p_{s,r}  p_{u,t} \prod_{\lambda=0}^{n-1} d_{\lambda}^{\bot}(X,Y,Z,W),  \]
	where
\[	d_{\lambda}^{\bot}(X,Y,Z,W) =   (\sigma_{x})_{r_\lambda,u_\lambda}(\sigma_{x})_{t_\lambda,s_\lambda}X+
	(\sigma_{y})_{r_\lambda,u_\lambda}(\sigma_{y})_{t_\lambda,s_\lambda}Y+
	(\sigma_{z})_{r_\lambda,u_\lambda}(\sigma_{z})_{t_\lambda,s_\lambda}Z+ (I)_{r_\lambda,u_\lambda}(I)_{t_\lambda,s_\lambda}W . \]
	
	We can check directly that for arbitrary $ r_{\lambda}, s_{\lambda}, t_{\lambda}, u_{\lambda} \in \{0,1\} $,
	\[  d_{\lambda}(X,Y,Z,W) =   d^{\bot}_{\lambda}(\frac{X- Y- Z+W }{2}, \frac{ -X+ Y- Z+W }{2},\frac{-X- Y+ Z+W }{2}, \frac{X+ Y+ Z+W }{2} ) .  \]
	This implies Equation (\ref{eq:D=D1}).
	Using this equation and the relationship between $ C $ and $ D $, we get
\begin{align*}
	C(X,Y,Z,W)& = D(YZ, YW, XW, XZ)\\
	& = \frac{1}{K} D^{\bot}(\frac{(X+Y)(Z-W)}{2},\frac{(X-Y)(Z-W)}{2},\frac{(X-Y)(Z+W)}{2}, \frac{(X+Y)(Z+W)}{2}) \\
		& = \frac{1}{K} C^{\bot}(Z+W, Z-W, \frac{X+Y}{2},\frac{X-Y}{2} ),
\end{align*}
where in the last step we use Equation \eqref{eq:C2=D2}. This completes the proof of the theorem.
\end{proof}

The following theorem generalize Theorem 3 in \cite{Ashikhmin1999} to the case of double weight distribution.
\begin{thm}\label{thm:double weight}
	Let $ ((Q,K,d_z/ d_x)) $ be an asymmetric quantum code with double weight distribution $ C_{i,j} $ and $ C_{i,j}^\bot $. Then
	\begin{enumerate}
		\item $ C_{i,j}^{\bot} \geqslant C_{i,j} \geqslant 0 $ for  $ 0\leqslant i ,j \leqslant n $, and $ C_{0,0} =C_{0,0}^{\bot} = 1  $.
		\item If $ t_x $, $ t_z $ are the two largest integers such that $ C_{i,j} =C_{i,j}^{\bot} $ for  $ i< t_x $ and $ j < t_z $, then
		$ d_x = t_x $ and $ d_z= t_z$.
	\end{enumerate}
\end{thm}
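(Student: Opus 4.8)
The plan is to mirror the proof of Theorem~3 in \cite{Ashikhmin1999}, reducing both assertions to a single pointwise statement about the operator $M_e := PeP$, where $P$ is the orthogonal projection of $\mathbb{C}^{2^n}$ onto $Q$. The preliminary step is to record, using $P^2 = P$, the cyclicity of the trace, and the fact that every $e \in E$ is Hermitian (each of $I,\sigma_x,\sigma_y,\sigma_z$ is), the identities $\Tr(eP) = \Tr(PeP) = \Tr(M_e)$ and $\Tr(ePeP) = \Tr\bigl((PeP)^2\bigr) = \Tr(M_e^2)$. Here $M_e$ is Hermitian, preserves $Q$, and annihilates $Q^{\perp}$, so it may be regarded as a Hermitian operator on the $K$-dimensional space $Q$; writing $\mu_1,\dots,\mu_K$ for its eigenvalues there, we have $\Tr(M_e) = \sum_{\ell}\mu_\ell \in \mathbb{R}$, $\Tr(M_e^2) = \sum_{\ell}\mu_\ell^2$, and
\[
 C_{i,j} = \frac{1}{K^2}\sum_{e\in E[i,j]}\Tr(M_e)^2, \qquad C_{i,j}^{\bot} = \frac{1}{K}\sum_{e\in E[i,j]}\Tr(M_e^2).
\]

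For part~1, $C_{i,j}\geqslant 0$ is immediate, and the Cauchy--Schwarz inequality $\Tr(M_e)^2 = \bigl(\sum_{\ell}\mu_\ell\bigr)^2 \leqslant K\sum_{\ell}\mu_\ell^2 = K\,\Tr(M_e^2)$ gives $\frac{1}{K^2}\Tr(M_e)^2 \leqslant \frac{1}{K}\Tr(M_e^2)$ for each $e$; summing over $e\in E[i,j]$ yields $C_{i,j}\leqslant C_{i,j}^{\bot}$. For the normalization, $w_X(e)=w_Z(e)=0$ forces $N_x(e)=N_y(e)=N_z(e)=0$, so $E[0,0]=\{I^{\otimes n}\}$; then $M_e=P$ and $\Tr(M_e)=\Tr(M_e^2)=K$, whence $C_{0,0}=C_{0,0}^{\bot}=1$.

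For part~2, the crux is the equality case of the Cauchy--Schwarz bound: $\bigl(\sum_{\ell}\mu_\ell\bigr)^2 = K\sum_{\ell}\mu_\ell^2$ holds precisely when $\mu_1=\cdots=\mu_K$, i.e.\ when $M_e = \lambda_e I_Q$, equivalently $PeP=\lambda_e P$ --- which is exactly the detectability criterion for $e$ recalled in Section~\ref{sec:sym}. Since every summand of $C_{i,j}^{\bot}-C_{i,j}=\frac{1}{K^2}\sum_{e\in E[i,j]}\bigl(K\,\Tr(M_e^2)-\Tr(M_e)^2\bigr)$ is nonnegative, we obtain the pointwise equivalence: $C_{i,j}=C_{i,j}^{\bot}$ if and only if every $e\in E[i,j]$ is detectable. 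Hence the condition ``$C_{i,j}=C_{i,j}^{\bot}$ for all $i<s$, $j<t$'' coincides with ``every $e\in E[i,j]$ with $i<s$, $j<t$ is detectable'', so the largest admissible pair is $(t_x,t_z)$ by hypothesis and $(d_x,d_z)$ by the definition of the asymmetric minimum distance, and therefore $d_x=t_x$ and $d_z=t_z$.

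The trace bookkeeping and the Cauchy--Schwarz estimate are routine; the step demanding care is the equality analysis --- verifying that equality forces $M_e$ to be a genuine scalar multiple of the identity on $Q$ and that this coincides verbatim with the paper's detectability condition $PeP=\lambda_e P$ --- followed by the comparison with the (jointly defined) asymmetric distances. This last comparison is where one must pin down precisely what ``the two largest integers'' means, but it is the same mild imprecision already present in the definition of asymmetric codes, and the pointwise equivalence transfers whatever reading one adopts verbatim from $(d_x,d_z)$ to $(t_x,t_z)$, so it causes no genuine obstacle.
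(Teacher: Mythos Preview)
Your proposal is correct and follows precisely the route the paper intends: the paper's own proof consists solely of the sentence ``The proof is similar to that of Theorem~3 in \cite{Ashikhmin1999} and so is omitted here,'' and you have carried out exactly that adaptation, with the Cauchy--Schwarz inequality on the eigenvalues of $PeP$ giving part~1 and its equality case recovering the detectability criterion $PeP=\lambda_e P$ for part~2.
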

\begin{proof}
	The proof is similar to that of Theorem 3 in \cite{Ashikhmin1999} and so is omitted here.
\end{proof}

\begin{rem}
	Suppose that $ Q $ is an additive quantum code of length $ n $ constructed from a classical linear code $ C $ over $ \mathbb{F}_4 $. Then the weight distribution of $ Q $ is nothing but the classical distribution induced by $  C  $ and its symplectic dual $ C_{th}^{\perp } $. That is
	\begin{align*}
	B_i &= \# \{c\in C : w_H(c)=i\},\\
	B_i^{\perp} &= \# \{c\in C_{th}^{\perp} : w_H(c)=i\},
	\end{align*}
	where $  w_H(c) $ denotes the Hamming weight of a codeword $ c $, see \cite{Ashikhmin2000,Rain1999}.
     Let $ \alpha $ be a fixed primitive element of $ \mathbb{F}_4 $, i.e., $ \mathbb{F}_4 =\{\alpha,\alpha^2,\alpha^3=\alpha+\alpha^2,0\}$. Write
		\begin{align*}
		c=(c_1,c_2,\cdots, c_{n})=(a_1 \alpha + b_1 \alpha^2,a_2 \alpha + b_2 \alpha^2,\cdots,a_n \alpha + b_n \alpha^2) \in
		\mathbb{F}_4^n,
		\end{align*}
		where $( a_1,a_2,\cdots,a_n,b_1,b_2,\cdots,b_n) \in \mathbb{F}_2^{2n} $. Then the double weight distribution of $ Q $ is deduced from $ C $ and its symplectic dual $ C_{th}^{\perp } $, i.e.,
	\begin{align*}
	C_{i,j} &= \# \{c=(c_1,c_2,\cdots, c_{n})\in C : \sum_{s=1}^{n }a_s=i,  \sum_{s=1}^{n}b_s=j  \},\\
		C_{i,j}^{\perp} &= \# \{c=(c_1,c_2,\cdots, c_{n})\in C_{th}^{\perp} : \sum_{s=1}^{n }a_s=i,  \sum_{s=1}^{n}b_s=j  \}.
	\end{align*}
 	For a vector $ c = (c_1, c_2, \cdots, c_{n})\in \mathbb{F}_4^n$, the composition of $ c $, denoted by $ \textup{comp}(c) $, is defined as
	$  \textup{comp}(c)=(k_1,k_2,k_3,n-k_1-k_2-k_3) $,
	 where $ k_j $ $ (j\neq 0) $ is the number of components $ c_s $ of $ c $ that are equal to $ \alpha^j $. Then the complete weight distribution of $ Q $ is also deduced from $ C $ and its symplectic dual $ C_{th}^{\perp } $, i.e.,
	 \begin{align*}
	 D_{i,j,k} &= \# \{c\in C :  \textup{comp}(c)=(i,j,k,n-i-j-k)\}, \\
	 D_{i,j,k}^{\perp} &= \# \{c \in C_{th}^{\perp} :\textup{comp}(c)=(i,j,k,n-i-j-k)\}.
	 \end{align*}
	\end{rem}

Here we provide a concrete example to illustrate our main results.
 \begin{example}
  Let $ (q,m)=(4,2) $. A $ [5, 3, 3] $ Hamming code $ \mathcal{H}_2 $ over $ \mathbb{F}_4 $ of length $ n=(q^m-1)/(q-1)=5 $ has check matrix
     \begin{align*}
     \left[\begin{matrix*}[c]
       1 &  0 &  1 & \alpha^2 & \alpha^2 \\
     0  & 1 & \alpha^2 & \alpha^2  & 1
     \end{matrix*}\right],
    \end{align*}
    where $ \alpha $ is a fixed primitive element of $ \mathbb{F}_4 $. Its dual code $ \mathcal{H}_2^{\perp}  $ is a $  [5, 2, 4]  $ linear code over $ \mathbb{F}_4 $. The code $ \mathcal{H}_2^{\perp}  $ induces an additive quantum code $ Q $ with parameters $ [[n,n-2m,3]]=[[5,1,3]] $. The weight enumerator of $ Q $ is computed from $C:= \mathcal{H}_2^{\perp}  $ and its symplectic dual $  C_{th}^{\perp } :=\mathcal{H}_2 $, namely,
    \begin{align*}
     B(X,Y)&=X^5+15XY^4,\\ 
    B^{\perp} (X,Y)&=X^5 + 30X^2Y^3 + 15XY^4 + 18Y^5.
   \end{align*}
   One verifies that the MacWilliams identity \eqref{eq:MacB} holds for $ B $ and $  B^{\perp}  $.
   The double weight enumerator of $ Q $ is as follows
   \begin{align*} 
   C (X,Y,Z,W)&=X^5 Z^5 + 5X^3Y^2 Z^3 W^2 + 5X^3Y^2 Z  W^4 + 5X Y^4 Z^3 W^2,\\
      C^{\perp} (X,Y,Z,W)& = X^5 Z^5 + X^5  W^5 
   + 5X^4 Y  Z^3  W^2 + 5X^4 Y  Z^2  W^3 
   + 5X^3 Y^2  Z^4  W + 5X^3 Y^2  Z^3  W^2\\ 
   &  \phantom{=} + 5X^3 Y^2  Z^2  W^3 + 5X^3 Y^2  Z   W^4  
  + 5X^2 Y^3  Z^4  W + 5X^2 Y^3  Z^3  W^2 
   + 5X^2 Y^3  Z^2  W^3\\
  &  \phantom{=} + 5X^2 Y^3  Z   W^4  
   + 5X  Y^4  Z^3  W^2+ 5X Y^4  Z^2  W^3   
   +    Y^5  Z^5  +   Y^5    W^5 .  
   \end{align*}
   The complete weight enumerator of $ Q $ is given below
   \begin{align*}
    D(X,Y,Z,W)&= W^5 + 5 W X^2 Y^2 + 5 W X^2 Z^2 + 5 W Y^2 Z^2,\\
    D^{\perp}(X,Y,Z,W)& =
    W^5 + X^5+ Y^5 + Z^5+ 5 W^2 X^2 Y + 5 W^2 X^2 Z + 5 W^2 X Y^2 + 5 W^2 X Z^2
    + 5 W^2 Y^2 Z \\
    &  \phantom{=} + 5 W^2 Y Z^2
     + 5 W X^2 Y^2 + 5 W X^2 Z^2 + 5 W Y^2 Z^2
    + 5 X^2 Y^2 Z + 5 X^2 Y Z^2 + 5 X Y^2 Z^2  .
   \end{align*}
   The above experimental results by Magma are consistent with the conclusions of Theorems \ref{thm:BandD} and \ref{thm:quaMac}.
\end{example}

\section{Krawtchouk polynomials and the key inequality} \label{sec:Kraw}
In this section, we introduce Krawtchouk polynomials and summarize their properties. This allows us to obtain the close relationship between $ C_{i,j} $ and $ C_{i,j}^\bot $ of an asymmetric quantum code. Then we propose the key inequality which enables us to reduce the problem of upper-bounding the size of asymmetric  quantum codes to a problem of finding polynomials possessing special properties.

Fix an integer $ n $. For $ 0 \leqslant i \leqslant n  $, the polynomial
\[ P_i (x) = \sum_{ j = 0 }^{i } (-1)^j
\binom{x}{j}\binom{n- x}{i-j} \]
is called the $ i $-th Krawtchouk polynomial. The first few polynomials are
\begin{align*}
P_0 (x)= 1 ,  P_1(x)= n -2x , P_2 (x ) = 2 x^2 - 2n x + \binom{n}{2}, \ldots
\end{align*}
These polynomials have the generating function
\[ (X+Y)^{n-r}(X-Y)^r = \sum_{i = 0}^n P_i (r)  X^{n-i}Y^i .   \]

Now we recall several important properties of the Krawtchouk polynomials, see \cite{MacWilliams1977} for more information.

The Krawtchouk polynomials satisfy the reciprocity formula
\begin{align}\label{eq:sym}
\binom{n}{i}P_s(i)=\binom{n}{s}P_i(s).
\end{align}

They also have the following property
\begin{align*}
\sum_{ i = 0 }^n \binom{n-i}{n-j}P_i(x)=2^j \binom{n-x}{j}.
\end{align*}

The Krawtchouk polynomials are orthogonal to each other, i.e.,
\begin{align}\label{eq:orth}
\sum_{ i = 0 }^n  P_r(i) P_i(s) = 2^n \delta_{r,s}.
\end{align}

Many important facts follow from this orthogonality. For example, there is a three-term recurrence:
\begin{align}\label{eq:recur}
(i+1) P_{i+1}(x)= (n-2x) P_i(x)-(n-i+1)P_{i-1}(x).
\end{align}
The Christoffel-Darboux formula (see Corollary 3.5 of \cite{Levenshtein1995}) also holds:
\begin{align}\label{eq:CDeq}
   P_{t+1}(x)P_t(a)- P_t(x)P_{t+1}(a)
   = \dfrac{2(a-x)}{t+1} \binom{n}{t}
     \sum_{i = 0}^t \dfrac{P_i(x)P_i(a)}{\binom{n}{i}}.
\end{align}

Using \eqref{eq:sym} and \eqref{eq:recur}, the ratio $ P_t(x+1)/P_t(x) $ is given by McEliece $ et.al.$ \cite{McEliece1977}
\begin{align}\label{eq:ratio}
\dfrac{P_t(x+1)}{P_t(x)}=\dfrac{n-2t+\sqrt{(n-2t)^2-4j(n-x)}}{2(n-x)}
\Big(1+O\Bk{\frac{1}{n}} \Big).
\end{align}

We will also need a result on asymptotic behaviour of the smallest root $ r_t $ of $ P_t(x) $. For $ t $ growing linearly in $ n $ and $ \tau=t/n $ (see, e.g., Eq A.20 of \cite{McEliece1977})
\begin{align}\label{eq:bound1}
\gamma_t =\dfrac{r_t}{n}= \dfrac{1}{2}-\sqrt{\tau(1-\tau)}+ o(1).
\end{align}
Remember that $ o(1) $ tends to $ 0 $ as $ n $ grows.

The following theorem shows that the Krawtchouk polynomials have close relation to double weight distribution $ C_{i,j} $ and $ C_{i,j}^\bot $ of a quantum code.
\begin{thm}\label{thm:Cij}
	Let $ ((Q,K,d_z/ d_x)) $ be an asymmetric quantum code with double weight distribution $ C_{i,j} $ and $ C_{i,j}^\bot $. Then
	\begin{align}
		C_{i,j} &= \frac{1}{2^n K} \sum_{r,s = 0}^{n } P_i(s) P_j (r) C_{r,s}^{\bot},\label{eq:Cij}\\
	C_{r,s}^{\bot} &= \frac{K }{2^n} \sum_{i,j = 0}^{n } P_r(j) P_s(i) C_{i,j} .\label{eq:Cij2}
	\end{align}
\end{thm}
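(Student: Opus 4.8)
The plan is to obtain both identities directly from the quaternary MacWilliams identity \eqref{eq:C=D1} by expanding each side in the monomial basis $X^{n-i}Y^iZ^{n-j}W^j$ and matching coefficients, with the Krawtchouk polynomials entering through their generating function $(X+Y)^{n-r}(X-Y)^r=\sum_{i=0}^{n}P_i(r)X^{n-i}Y^i$.

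First I would write $C^{\bot}(X,Y,Z,W)=\sum_{r,s=0}^{n}C_{r,s}^{\bot}X^{n-r}Y^rZ^{n-s}W^s$ and substitute the arguments $\bigl(Z+W,\ Z-W,\ \tfrac{X+Y}{2},\ \tfrac{X-Y}{2}\bigr)$ dictated by \eqref{eq:C=D1}. Pulling the two factors of $\tfrac12$ out of the third and fourth slots produces an overall $2^{-n}$, giving
\[
C^{\bot}\Bigl(Z+W,\,Z-W,\,\tfrac{X+Y}{2},\,\tfrac{X-Y}{2}\Bigr)=\frac{1}{2^{n}}\sum_{r,s=0}^{n}C_{r,s}^{\bot}\,(Z+W)^{n-r}(Z-W)^{r}\,(X+Y)^{n-s}(X-Y)^{s}.
\]
Applying the generating function to the pair $(Z+W)^{n-r}(Z-W)^r=\sum_{j}P_j(r)Z^{n-j}W^j$ and to the pair $(X+Y)^{n-s}(X-Y)^s=\sum_{i}P_i(s)X^{n-i}Y^i$, the right-hand side becomes $\tfrac{1}{2^{n}}\sum_{r,s}C_{r,s}^{\bot}\bigl(\sum_j P_j(r)Z^{n-j}W^j\bigr)\bigl(\sum_i P_i(s)X^{n-i}Y^i\bigr)$. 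Now comparing the coefficient of $X^{n-i}Y^iZ^{n-j}W^j$ on both sides of $C=\tfrac1K\,C^{\bot}(\cdots)$, and recalling $C(X,Y,Z,W)=\sum_{i,j}C_{i,j}X^{n-i}Y^iZ^{n-j}W^j$, yields exactly \eqref{eq:Cij}.

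For \eqref{eq:Cij2} I would invert the linear relation \eqref{eq:Cij} using the orthogonality relation \eqref{eq:orth}: multiply \eqref{eq:Cij} by $P_{r}(j)P_{s}(i)$, sum over $0\le i,j\le n$, and use $\sum_{j}P_{r}(j)P_{j}(r')=2^{n}\delta_{r,r'}$ together with the analogous identity in the $i$-variable to collapse the quadruple sum to a single term, leaving $C_{r,s}^{\bot}=\frac{K}{2^{n}}\sum_{i,j}P_{r}(j)P_{s}(i)C_{i,j}$. (Equivalently, one can check that the substitution $(X,Y,Z,W)\mapsto(Z+W,Z-W,\tfrac{X+Y}{2},\tfrac{X-Y}{2})$ is an involution, so \eqref{eq:C=D1} also expresses $C^{\bot}$ through $C$ by the same formula, and then repeat the coefficient comparison.)

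I do not expect a genuine obstacle here; the argument is essentially bookkeeping. The only points requiring care are the appearance of the factor $2^{-n}$ when the halves are extracted from the third and fourth slots, and keeping the index pairs $(i,s)$ and $(j,r)$ correctly matched so that the Krawtchouk arguments land in the order displayed in \eqref{eq:Cij}–\eqref{eq:Cij2}; a sign check on the generating function (it is $P_i(r)$, not $P_r(i)$, that multiplies $X^{n-i}Y^i$) fixes the orientation, after which the reciprocity \eqref{eq:sym} can be invoked if one prefers the symmetric form $\binom{n}{i}^{-1}$-normalized statement.
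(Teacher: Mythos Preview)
Your proposal is correct and follows essentially the same approach as the paper: both expand the MacWilliams identity \eqref{eq:C=D1} via the Krawtchouk generating function and compare coefficients of $X^{n-i}Y^iZ^{n-j}W^j$ to obtain \eqref{eq:Cij}. For \eqref{eq:Cij2} the paper takes your parenthetical alternative, inverting \eqref{eq:C=D1} to $C^{\bot}(X,Y,Z,W)=K\cdot C(Z+W,Z-W,\tfrac{X+Y}{2},\tfrac{X-Y}{2})$ and repeating the coefficient comparison, rather than appealing to the orthogonality relation \eqref{eq:orth}; either route works and both are immediate.
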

\begin{proof}
	We have from Theorem \ref{thm:quaMac} that
	\begin{align*}
	& \sum_{i,j = 0}^{n}	C_{i,j} X^{n-i}Y^i Z^{n-j}W^ j  \\
	&=  \frac{1}{2^n  K} \sum_{r,s = 0}^{n}   C_{r,s}^{\bot} (Z+W)^{n-r}( Z-W)^r ( X+Y )^{n-s }( X-Y )^s \\
	&=  \frac{1}{2^n  K} \sum_{r,s = 0}^{n}   C_{r,s}^{\bot} \sum_{i,j=0}^n P_i (s) P_j(r) X^{n -i } Y^ i Z^{n-j}W^j   \\
	&=  \frac{1}{2^n  K} \sum_{i,j=0}^n\sum_{r,s = 0}^{n}   C_{r,s}^{\bot} P_i (s) P_j(r)  X^{n -i } Y^ i Z^{n-j}W^j .
	\end{align*}
	 This completes the proof of \eqref{eq:Cij}. Taking into account that
	\begin{align*}
	  C^{\bot} (X,Y,Z,W) &= K\cdot C(Z+W, Z-W, \frac{X+Y}{2}, \frac{X-Y}{2})
	\end{align*}
	from \eqref{eq:C=D1}, the identity \eqref{eq:Cij2} follows immediately.
\end{proof}

The key inequality is given below, which will be needed in the sequel.

\begin{lem}\label{lem:K}
	Let $ Q $ be an $ ((n, K, d_z/ d_x)) $ quantum code. Assume that the polynomial
	$ f(x,y) = \sum_{i,j=0}^n \alpha_{i,j} P_i(y) P_j(x) $
	satisfies the following conditions
	\begin{enumerate}
		\item $ \alpha_{i,j} \geqslant 0 $ for $0   \leqslant i,j \leqslant n $,
		\item  $ f(r,s) > 0 $  for $0   \leqslant r  < d_x $ and $ 0   \leqslant s  < d_z $,
		\item  $ f(r,s) \leqslant 0 $  for $ r  \geqslant d_x $ or $  s  \geqslant d_z $.
	\end{enumerate}
Then
 \[ K \leqslant \frac{1}{2^n} \max_{0 \leqslant i < d_x ,  0 \leqslant j < d_z} \frac{f(i,j)}{\alpha_{i,j}}  . \]
\end{lem}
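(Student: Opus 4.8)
The plan is to start from the two relations in Theorem~\ref{thm:Cij} linking $C_{i,j}$ and $C_{i,j}^{\bot}$, expand the given polynomial $f$ against the double weight distributions, and exploit the sign and nonnegativity hypotheses together with the inequality $C_{i,j}^{\bot}\geqslant C_{i,j}\geqslant 0$ from Theorem~\ref{thm:double weight}. First I would form the double sum
\[
S:=\sum_{i,j=0}^{n}\alpha_{i,j}\,C_{i,j}\,f(i,j)
\]
and compute it in two different ways. On one hand, since $f(i,j)=\sum_{r,s}\alpha_{r,s}^{\text{(rewritten)}}\dots$ — more precisely, writing $f(i,j)=\sum_{r,s=0}^n \alpha_{r,s}P_r(j)P_s(i)$ and inserting \eqref{eq:Cij2}, the inner sum $\sum_{i,j}C_{i,j}P_r(j)P_s(i)=\tfrac{2^n}{K}C_{r,s}^{\bot}$ collapses, yielding
\[
S=\frac{2^n}{K}\sum_{r,s=0}^{n}\alpha_{r,s}\,C_{r,s}^{\bot}.
\]

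Next I would split both the left-hand expression and this collapsed right-hand expression according to whether the indices lie in the ``small'' box $0\leqslant r<d_x,\ 0\leqslant s<d_z$ or outside it. For indices inside the box, Theorem~\ref{thm:double weight}(2) gives $C_{r,s}=C_{r,s}^{\bot}$ (since $d_x=t_x$, $d_z=t_z$), so those contributions match. For indices outside the box, hypothesis~(3) gives $f(i,j)\leqslant 0$ while $\alpha_{i,j}C_{i,j}\geqslant 0$ by hypothesis~(1) and Theorem~\ref{thm:double weight}(1), so the left-hand contribution from outside the box is $\leqslant 0$; meanwhile $\alpha_{r,s}C_{r,s}^{\bot}\geqslant 0$, so dropping the outside-box terms on the right only decreases it. Combining, one obtains
\[
\sum_{\substack{0\leqslant i<d_x\\ 0\leqslant j<d_z}}\alpha_{i,j}\,C_{i,j}\,f(i,j)
\;\geqslant\; S
\;\geqslant\;\frac{2^n}{K}\sum_{\substack{0\leqslant r<d_x\\ 0\leqslant s<d_z}}\alpha_{r,s}\,C_{r,s}^{\bot}
\;=\;\frac{2^n}{K}\sum_{\substack{0\leqslant i<d_x\\ 0\leqslant j<d_z}}\alpha_{i,j}\,C_{i,j},
\]
using $C_{r,s}^{\bot}=C_{r,s}$ inside the box in the last equality. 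Rearranging, $\sum_{\text{box}}\alpha_{i,j}C_{i,j}\bigl(f(i,j)-\tfrac{2^n}{K}\bigr)\geqslant 0$.

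Finally I would isolate the $(0,0)$ term: $C_{0,0}=1$, and I expect $\alpha_{0,0}>0$ is forced (or at least one term of the box sum is positive) so that the box sum is genuinely positive; then since every summand has $\alpha_{i,j}C_{i,j}\geqslant 0$ and they cannot all make $f(i,j)-2^nK^{-1}$ negative, at least one index $(i,j)$ in the box must satisfy $f(i,j)\geqslant 2^n/K$, i.e. $\alpha_{i,j}>0$ and $f(i,j)/\alpha_{i,j}\geqslant 2^n/K$ cannot be ruled out — more carefully, from $\sum_{\text{box}}\alpha_{i,j}C_{i,j}f(i,j)\geqslant \tfrac{2^n}{K}\sum_{\text{box}}\alpha_{i,j}C_{i,j}$ and bounding $f(i,j)\leqslant \alpha_{i,j}\max(f(i,j)/\alpha_{i,j})$ termwise on the left, I get $\max_{\text{box}}\tfrac{f(i,j)}{\alpha_{i,j}}\cdot\sum_{\text{box}}\alpha_{i,j}C_{i,j}\geqslant \tfrac{2^n}{K}\sum_{\text{box}}\alpha_{i,j}C_{i,j}$, and cancelling the common positive factor yields $K\leqslant \tfrac{1}{2^n}\max_{0\leqslant i<d_x,\,0\leqslant j<d_z}f(i,j)/\alpha_{i,j}$. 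The main obstacle is the bookkeeping around the box boundary: one must be careful that hypothesis~(3) covers \emph{all} pairs with $r\geqslant d_x$ \emph{or} $s\geqslant d_z$ (not just both), and that Theorem~\ref{thm:double weight}(2) indeed supplies the matching $C=C^{\bot}$ throughout the full half-open box; handling the strict-versus-nonstrict inequalities so the final cancellation of $\sum_{\text{box}}\alpha_{i,j}C_{i,j}$ is legitimate (it is, because the $(0,0)$ term contributes $\alpha_{0,0}\cdot 1>0$ once one notes $\alpha_{0,0}$ must be positive, else $f$ would vanish or be negative at $(0,0)$ contradicting hypothesis~(2)).
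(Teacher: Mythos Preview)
There is a genuine gap: once the bookkeeping is made consistent, your chain of inequalities runs in the wrong direction and yields only a \emph{lower} bound on $K$. First, the sum $S=\sum_{i,j}\alpha_{i,j}C_{i,j}f(i,j)$ does not collapse under \eqref{eq:Cij2}: the extra factor $\alpha_{i,j}$ blocks the Krawtchouk inversion, since the inner sum you actually invoke is $\sum_{i,j}C_{i,j}P_r(j)P_s(i)$, not $\sum_{i,j}\alpha_{i,j}C_{i,j}P_r(j)P_s(i)$. Taking your intended quantity to be $S=\sum_{i,j}C_{i,j}f(i,j)$, the collapse and your box truncations do go through and give
\[
\sum_{\text{box}} f(i,j)\,C_{i,j}\;\geqslant\;\frac{2^n}{K}\sum_{\text{box}}\alpha_{i,j}\,C_{i,j}.
\]
Bounding the left side termwise by $\max_{\text{box}}(f/\alpha)\cdot\sum_{\text{box}}\alpha_{i,j}C_{i,j}$ and cancelling the common positive factor then gives $\max_{\text{box}}(f/\alpha)\geqslant 2^n/K$, i.e.\ $K\geqslant 2^n\big/\max_{\text{box}}(f/\alpha)$. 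Your last line, passing from $\max\geqslant 2^n/K$ to $K\leqslant \max/2^n$, is an algebra slip; these are not equivalent.

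The paper's proof uses the other direction of Theorem~\ref{thm:Cij}, namely \eqref{eq:Cij}, so that the factor attached to $\sum\alpha_{i,j}C_{i,j}$ is $2^nK$ rather than $2^n/K$: one computes $2^nK\sum_{i,j}\alpha_{i,j}C_{i,j}=\sum_{r,s}f(r,s)C_{r,s}^{\bot}$, restricts both sides to the box (dropping nonnegative $\alpha_{i,j}C_{i,j}$ terms outside the box on the left, nonpositive $f(r,s)C_{r,s}^{\bot}$ terms on the right), replaces $C^{\bot}$ by $C$ inside the box via Theorem~\ref{thm:double weight}, and then divides to obtain $2^nK\leqslant \sum_{\text{box}}f\,C\big/\sum_{\text{box}}\alpha\,C\leqslant\max_{\text{box}}(f/\alpha)$. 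Your argument can be repaired by this single change of identity; as written it establishes the reversed inequality.
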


\begin{proof}
	It follows from Theorems \ref{thm:double weight} and \ref{thm:Cij} that
\begin{align*}
	2^n K \sum_{i=0}^{d_x-1} \sum_{j=0}^{d_z-1} \alpha_{i,j} C_{i,j}
	& \leqslant 	2^n K \sum_{i,j = 0}^n \alpha_{i,j} C_{i,j} \\
	&=    \sum_{i,j = 0}^n \alpha_{i,j} \sum_{r,s=0}^n P_i(s) P_j (r) C_{r,s}^{\bot}   \\
	&= \sum_{r,s=0}^n f(r,s) C_{r,s}^{\bot} \\
	& \leqslant \sum_{r=0}^{d_x-1} \sum_{s=0}^{d_z-1} f(r,s) C_{r,s}^{\bot} \\
		& = \sum_{r=0}^{d_x-1} \sum_{s=0}^{d_z-1} f(r,s) C_{r,s} .
\end{align*}
Thus we have
\[ 2^n K \leqslant \frac{\sum_{i=0}^{d_x-1} \sum_{j=0}^{d_z-1} f(i,j) C_{i,j}}{\sum_{i=0}^{d_x-1} \sum_{j=0}^{d_z-1} \alpha_{i,j} C_{i,j} } \leqslant  \max_{0 \leqslant i < d_x ,  0 \leqslant j < d_z} \frac{f(i,j)}{\alpha_{i,j}},  \]
completing the proof of this lemma.
\end{proof}

The following lemma is a consequence of Lemma \ref{lem:K}.
\begin{lem}\label{lem:K2}
	Let $ Q $ be an $ ((n, K, d_z/ d_x)) $ quantum code. Define	$ f(x,y) = f_1(x)f_2(y)$ where  $f_1(x)=\sum_{ j=0}^n \beta_{ j}   P_j(x) $ and
	$f_2(y)=\sum_{ i=0}^n \alpha_{ i}   P_i(y) $. Assume that the polynomial
    $ f(x,y) $
	satisfies the following conditions
	\begin{enumerate}
		\item  For even $ i $, $ \alpha_i>0 $, $ \beta_i>0 $. For odd $ i $, $ \alpha_{i }  = \beta_{i } = 0 $.
		\item  $ f(r,s) \geqslant 0 $  for $0   \leqslant r  < d_x $ and $ 0   \leqslant s  < d_z $. For even $ r $, $ f_1(r) >0 $, $ f_2(r)>0 $. For odd $ r $, $   f_1(r)=f_2(r)= 0 $.
		\item  $ f(r,s) \leqslant 0 $  for $ r  \geqslant d_x $ or $  s  \geqslant d_z $.
	\end{enumerate}
	Then \[ K \leqslant \frac{1}{2^n}
	\max_{ \substack{0 \leqslant i < d_x\\ 2|i} } \frac{f_1(i )}{\beta_{i}}
	\max_{ \substack{ 0 \leqslant j < d_z\\2|j} } \frac{f_2(j)}{\alpha_{j}}  . \]
\end{lem}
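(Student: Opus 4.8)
The plan is to imitate the proof of Lemma~\ref{lem:K}, but to run the whole argument with every index sum restricted to \emph{even} values, since the parity conditions~(1)--(2) force exactly this restriction. The first point to notice is that Lemma~\ref{lem:K} cannot simply be quoted: expanding $f(x,y)=f_1(x)f_2(y)=\sum_{i,j=0}^{n}\alpha_i\beta_j\,P_i(y)P_j(x)$ identifies the coefficient array as $\alpha_{i,j}=\alpha_i\beta_j\geqslant 0$, which is fine, but because $\alpha_i=\beta_i=0$ for odd $i$ one only gets $f(r,s)\geqslant 0$ on the box $0\leqslant r<d_x,\ 0\leqslant s<d_z$ (indeed $f(r,s)=0$ whenever $r$ or $s$ is odd), not the strict positivity required by condition~(2) of Lemma~\ref{lem:K}. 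So a direct re-derivation is needed, and the extra parity information in the hypotheses is precisely what compensates.

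First I would reproduce the chain of (in)equalities from the proof of Lemma~\ref{lem:K} with $\alpha_{i,j}=\alpha_i\beta_j$. By Theorem~\ref{thm:Cij},
\[
2^{n}K\sum_{i,j=0}^{n}\alpha_i\beta_j\,C_{i,j}=\sum_{i,j=0}^{n}\alpha_i\beta_j\sum_{r,s=0}^{n}P_i(s)P_j(r)\,C_{r,s}^{\bot}=\sum_{r,s=0}^{n}f(r,s)\,C_{r,s}^{\bot}.
\]
Using $C_{r,s}^{\bot}\geqslant 0$ (Theorem~\ref{thm:double weight}) together with $f(r,s)\leqslant 0$ for $r\geqslant d_x$ or $s\geqslant d_z$ (condition~(3)), the right side is at most $\sum_{r<d_x,\,s<d_z}f(r,s)\,C_{r,s}^{\bot}$, and on that box $C_{r,s}^{\bot}=C_{r,s}$ (again Theorem~\ref{thm:double weight}). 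Now exploit parity: on the left $\alpha_i\beta_j=0$ unless $i,j$ are both even and each surviving term $\alpha_i\beta_j C_{i,j}$ is $\geqslant 0$, so the left side is $\geqslant 2^{n}K\sum_{i<d_x,\,j<d_z,\ 2\mid i,\ 2\mid j}\alpha_i\beta_j C_{i,j}$; on the right $f_1(r)f_2(s)=0$ when $r$ or $s$ is odd, so the sum collapses to the even indices. This gives
\[
2^{n}K\sum_{\substack{0\leqslant i<d_x\\ 2\mid i}}\sum_{\substack{0\leqslant j<d_z\\ 2\mid j}}\alpha_i\beta_j\,C_{i,j}\ \leqslant\ \sum_{\substack{0\leqslant r<d_x\\ 2\mid r}}\sum_{\substack{0\leqslant s<d_z\\ 2\mid s}}f_1(r)f_2(s)\,C_{r,s}.
\]

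To finish I would note that the left-hand denominator is strictly positive, its $i=j=0$ term being $\alpha_0\beta_0 C_{0,0}=\alpha_0\beta_0>0$ (here $0$ is even and $C_{0,0}=1$). On the even-index box $\alpha_r\beta_s>0$, so writing $f_1(r)f_2(s)C_{r,s}=\frac{f_1(r)f_2(s)}{\alpha_r\beta_s}\,(\alpha_r\beta_s C_{r,s})$ and applying the mediant inequality (all weights $\alpha_r\beta_s C_{r,s}\geqslant 0$) yields $2^{n}K\leqslant\max_{2\mid r<d_x,\ 2\mid s<d_z}\frac{f_1(r)f_2(s)}{\alpha_r\beta_s}$; and since by condition~(2) both $f_1(r)/\beta_r>0$ for even $r<d_x$ and $f_2(s)/\alpha_s>0$ for even $s<d_z$, the maximum of the product splits as the product of the maxima, which is the claim after dividing by $2^{n}$. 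The only genuine obstacle is the realization that Lemma~\ref{lem:K} is not applicable off the shelf and that the correct remedy is the even-index restriction; after that, everything is bookkeeping except the factorization of the maximum over a product, and it is exactly to make that step legitimate that condition~(2) demands strict positivity (not mere nonnegativity) of $f_1$ and $f_2$ at the relevant even arguments.
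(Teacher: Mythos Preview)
Your proof is correct and is exactly what the paper has in mind: the paper gives no argument beyond the sentence ``The following lemma is a consequence of Lemma~\ref{lem:K},'' and your derivation is precisely the re-running of that lemma's chain of inequalities with the even-index restriction and the product factorization of the maximum. Your observation that Lemma~\ref{lem:K} cannot be invoked verbatim (because condition~(2) there demands strict positivity on the full box, whereas here $f(r,s)=0$ at odd indices) is well taken and justifies the separate statement.
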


\begin{lem}
	Let $ A(x) = 2^{n-d+1}\prod_{r= d}^{n} \left(1- \frac{x}{r} \right)$ . Then
	\[ A(x) = \sum_{i = 0}^n \alpha_i P_i(x) \]
	where
	\begin{align}\label{eq:al_i}
	\alpha_i  =\alpha_i (d)  =  \binom{n-i}{d-1}  \Big / \binom{n}{n-d+1}.
	\end{align}
\end{lem}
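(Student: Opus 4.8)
The plan is to recognize $A(x)$ as $\sum_{i=0}^n\alpha_i P_i(x)$ directly from the summation formula $\sum_{i=0}^n\binom{n-i}{n-j}P_i(x)=2^j\binom{n-x}{j}$ recorded above, so that nothing beyond elementary manipulation of binomial coefficients is needed. First I would specialize that identity to $j=n-d+1$. Since $\binom{n-i}{n-(n-d+1)}=\binom{n-i}{d-1}$ and $2^j=2^{n-d+1}$, it becomes
\[ \sum_{i=0}^n\binom{n-i}{d-1}P_i(x)=2^{n-d+1}\binom{n-x}{n-d+1}; \]
note that $\binom{n-i}{d-1}$ vanishes for $i>n-d+1$, which matches $\deg A=n-d+1$.

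Next I would put the right-hand side into the product form occurring in $A$. Writing the generalized binomial coefficient as a falling factorial,
\[ \binom{n-x}{n-d+1}=\frac{(n-x)(n-1-x)\cdots(d-x)}{(n-d+1)!}=\frac{1}{(n-d+1)!}\prod_{r=d}^{n}(r-x). \]
Dividing the displayed identity through by $\binom{n}{n-d+1}=\dfrac{n!}{(d-1)!\,(n-d+1)!}$ and using $n!=(d-1)!\prod_{r=d}^n r$, the scalar prefactor collapses to $2^{n-d+1}/\prod_{r=d}^n r$, whence
\[ \sum_{i=0}^n\alpha_i P_i(x)=\sum_{i=0}^n\frac{\binom{n-i}{d-1}}{\binom{n}{n-d+1}}P_i(x)=2^{n-d+1}\prod_{r=d}^n\frac{r-x}{r}=2^{n-d+1}\prod_{r=d}^n\Bigl(1-\frac{x}{r}\Bigr)=A(x), \]
which is the claim. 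The degenerate case $d=n+1$ (empty product) is checked separately and trivially: there $A\equiv 1$ and only $\alpha_0=1$ survives.

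I expect no genuine obstacle here; the only points that need care are the correct choice $j=n-d+1$ in the specialization and the elementary identity $\prod_{r=d}^n r=n!/(d-1)!$. If one prefers not to cite the summation formula as a black box, it is itself obtained from the generating function $(X+Y)^{n-r}(X-Y)^r=\sum_i P_i(r)X^{n-i}Y^i$ by the substitution $X=1+Z$, $Y=1$ and extraction of the coefficient of $Z^{d-1}$: the left side yields $2^{n-d+1}\binom{n-r}{n-d+1}$ and the right side yields $\sum_i\binom{n-i}{d-1}P_i(r)$, and since both are polynomials in $r$ of degree at most $n$ that agree at the $n+1$ values $r=0,1,\dots,n$, they coincide identically.
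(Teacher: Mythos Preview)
Your proof is correct and rests on the same Krawtchouk summation identity the paper records, $\sum_{i}\binom{n-i}{n-j}P_i(x)=2^{j}\binom{n-x}{j}$; the only difference is the direction in which it is used. The paper first rewrites $A(x)=2^{n-d+1}\binom{n-x}{n-d+1}\big/\binom{n}{n-d+1}$, then applies the inverse Krawtchouk transform $\alpha_i=2^{-n}\sum_r A(r)P_r(i)$ via orthogonality, and finally evaluates that sum by specializing the same identity with $j=d-1$. You instead specialize with $j=n-d+1$ and read off the expansion directly, which is slightly more economical since it bypasses the detour through orthogonality; otherwise the two arguments are the same.
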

\begin{proof}
	By definition, we have
	\[ A(x) =2^{n-d+1} \prod_{r= d}^{n} \left(1- \frac{x}{r} \right) = 2^{n-d+1}\binom{n-x}{n-d+1} \Big / \binom{n}{n-d+1}  .\]
It is known from the book \cite{MacWilliams1977} (Exercise 41) that
\[ 	 \sum_{r=0}^n \binom{n-r}{n-d+1} P_r(i) = 2^{d-1} \binom{n-i}{d-1}, \]
This implies that
	 \[  \alpha_i = 2^{-n} \sum_{r=0}^n  A(r) P_r(i)  =  \binom{n-i}{d-1}  \Big / \binom{n}{n-d+1}, \]
	 completing the proof of this lemma.
\end{proof}

\section{upper bounds}\label{sec:upper}

In this section, we extend the work of \cite{Ashikhmin1999} and derive asymptotic upper bounds on the size of an arbitrary asymmetric quantum code of given length and minimum distance. Precisely, the Singleton bound, the Hamming bound and the first linear programming bound are determined utilizing the key inequality presented in Section \ref{sec:Kraw}.
 \subsection{A Singleton-type bound}
\begin{thm}[Quantum Singleton Bound]
	Let $ Q $ be an $ [[n,k,d_z/d_x]] $ quantum code. Then  $ n \geqslant k+ d_x+d_z -2 $.
\end{thm}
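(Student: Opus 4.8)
The plan is to apply the key inequality of Lemma~\ref{lem:K2} with a carefully chosen product polynomial $f(x,y)=f_1(x)f_2(y)$, where $f_1$ will ``see'' the $d_x$ direction and $f_2$ the $d_z$ direction. Motivated by the classical Singleton bound via linear programming, I would try to build each factor out of the polynomial $A(x)=2^{\,n-d+1}\prod_{r=d}^{n}\bigl(1-\tfrac{x}{r}\bigr)$ analyzed in the last lemma of Section~\ref{sec:Kraw}, which has nonnegative Krawtchouk coefficients $\alpha_i(d)=\binom{n-i}{d-1}/\binom{n}{n-d+1}$ and vanishes at $x=d,d+1,\dots,n$. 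However, $A(x)$ alone does not satisfy the parity constraints of Lemma~\ref{lem:K2} (coefficients required to vanish at odd indices, factor required to vanish at odd arguments below $d$), so the first real step is to symmetrize: replace each factor by something like $A(x)+A(n-x)$ or a product $A(x)A(n-x)$-type construction so that the resulting polynomial is even under $x\mapsto n-x$ in the appropriate sense and has the required sign pattern. Concretely, I expect to take $f_1$ built from $d=d_x$ and $f_2$ built from $d=d_z$, each normalized so that condition (1) (positive even coefficients, zero odd coefficients) and condition (2)--(3) (nonnegativity on the box $0\le r<d_x,\ 0\le s<d_z$, nonpositivity outside) hold.

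The key steps, in order, would be: (i) write down the explicit candidate $f_1(x)$ and $f_2(y)$ and expand them in the Krawtchouk basis using the identity $\sum_{r}\binom{n-r}{n-d+1}P_r(i)=2^{d-1}\binom{n-i}{d-1}$ quoted above, checking condition (1); (ii) verify the sign conditions (2) and (3) — that $f_1(x)\ge 0$ for $0\le x<d_x$ and $f_1(x)\le 0$ for $x\ge d_x$ (and likewise for $f_2$), which follows from the product form $\prod(1-x/r)$ once the symmetrization is chosen so the extra zeros land outside the relevant range; (iii) compute $\max_{0\le i<d_x,\ 2\mid i} f_1(i)/\beta_i$ and the analogous quantity for $f_2$, which by the telescoping/binomial structure should collapse to a clean closed form; (iv) plug into Lemma~\ref{lem:K2} to get $K\le 2^{-n}\cdot(\text{clean bound})$, and finally take $\log_2$ of both sides, using $k=\log K$, to read off $n\ge k+d_x+d_z-2$.

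The main obstacle I anticipate is step (ii)–(iii): making the parity conditions of Lemma~\ref{lem:K2} compatible with the Singleton polynomial while keeping the sign pattern and while keeping the maximum $\max_i f_1(i)/\beta_i$ small enough to yield exactly the $d_x+d_z-2$ (rather than a weaker additive constant). The symmetrization that fixes the parity typically introduces spurious positive values or shifts where the polynomial changes sign, so the delicate part is choosing it — e.g. deciding between $A(x)+A(n-x)$, $\tfrac12\bigl(A(x)+A(n-x)\bigr)$, or a degree-doubling product — so that the extra zeros sit at $x=n-d_x+1,\dots$ harmlessly and the ratio $f_1(i)/\beta_i$ is maximized at $i=0$ with value $2^{\,n-d_x+1}$ (and similarly $2^{\,n-d_z+1}$ for $f_2$), giving $K\le 2^{-n}\cdot 2^{\,n-d_x+1}\cdot 2^{\,n-d_z+1}=2^{\,n-d_x-d_z+2}$. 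If that computation goes through, taking logarithms finishes the proof; if the naive symmetrization loses a factor, I would fall back to analyzing $f_1$ directly as $A$ restricted to even arguments (since only even $i$ enter both the sum in Lemma~\ref{lem:K2} and, for a code, the support of $C_{i,j}$), which may let me skip the symmetrization entirely and use $A(x)$ essentially as is.
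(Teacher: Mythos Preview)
Your instinct to build the test function from the Singleton polynomial $A_d(x)=2^{\,n-d+1}\prod_{r=d}^{n}(1-x/r)$ is exactly right, and your predicted endgame $K\le 2^{-n}\cdot 2^{\,n-d_x+1}\cdot 2^{\,n-d_z+1}=2^{\,n-d_x-d_z+2}$ is precisely what the paper obtains. The gap in your plan is the choice of lemma: you reach for Lemma~\ref{lem:K2}, whose parity hypotheses (vanishing odd Krawtchouk coefficients, vanishing at odd arguments below $d$) force you into the symmetrization detour you correctly flag as the ``main obstacle.'' That obstacle is self-inflicted. The paper simply applies Lemma~\ref{lem:K}, which has no parity restrictions, to the product
\[
f(x,y)=\sum_{i,j}\alpha_i(d_z)\alpha_j(d_x)P_i(y)P_j(x)=A_{d_x}(x)\,A_{d_z}(y),
\]
and all three hypotheses of Lemma~\ref{lem:K} are immediate: $\alpha_{i,j}=\alpha_i(d_z)\alpha_j(d_x)\ge 0$ because the $\alpha_i(d)$ are ratios of binomial coefficients; $f(r,s)>0$ on the box $0\le r<d_x$, $0\le s<d_z$ since every factor $1-r/r'$ with $r'\ge d_x>r$ is positive; and $f(r,s)=0$ (hence $\le 0$) at every integer $(r,s)$ with $r\ge d_x$ or $s\ge d_z$, because one of the linear factors vanishes. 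No symmetrization, no degree doubling, no restriction to even indices is needed.

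The remaining computation is then exactly what you anticipated: $f(i,j)/\alpha_{i,j}$ factors as $2^{\,n-d_x-d_z+2}\,g(i;d_x)\,g(j;d_z)$ with $g(i;d)=\binom{n-i}{n-d+1}/\binom{n-i}{d-1}$, and one checks $g(i;d)/g(i+1;d)\ge 1$ for $d\le n/2+1$, so the maximum over the box is attained at $(0,0)$ with $g(0;d)=1$. Your fallback justification (``only even $i$ enter \ldots\ for a code, the support of $C_{i,j}$'') is not correct in general and should be dropped; the point is simply that Lemma~\ref{lem:K} already handles the unsymmetrized $A$.
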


 \begin{proof}
 Set
  $ \alpha_{i,j} = \alpha_i(d_z)\alpha_j(d_x) \geqslant 0 $ for $ 0 \leqslant i,j \leqslant n $, where $ \alpha_i(d) $ is defined in \eqref{eq:al_i}.
Let \begin{align*}
f(x,y) & = \sum_{i,j =0 }^n  \alpha_{i,j} P_i(y) P_j(x)\\
& = \sum_{i,j =0 }^n  \alpha_{i}(d_z) \alpha_{j}(d_x) P_i(y) P_j(x)  \\
&= 2^{2n - d_x-d_z +2} \prod_{r=d_x}^{n} \left(1- \frac{x}{r}\right)\prod_{s=d_z}^{n} \left(1- \frac{y}{s}\right).
\end{align*}
One may check that this polynomial verifies all conditions of Lemma \ref{lem:K}.
So  \begin{align*}
 K &\leqslant \frac{1}{2^n} \max_{0 \leqslant i < d_x ,  0 \leqslant j < d_z} \frac{f(i,j)}{\alpha_{i,j}}  \\
  &= 2^{n-d_x-d_z +2} \max_{0 \leqslant i <d_x } g(i; d_x) \max_{0 \leqslant j <d_z } g(j; d_z),
\end{align*}
where  \[ g(i; d) = \binom{n-i}{n-d+1} \Big/\binom{n-i}{d-1} . \]
For $ d \leqslant \frac{n}{2}+1 $, we have $  {g(i;d)}/{g(i+1;d)} \geqslant 1 $.
Therefore, we obtain
 \[ K  \leqslant 2^{n-d_x-d_z +2} g(0,d_x)g(0,d_z) = 2^{n-d_x-d_z +2}. \]
 Since $ K = 2^k $, we find $ n \geqslant k+ d_x+d_z -2 $.
  \end{proof}

 \subsection{A Hamming-type bound}
 Let $ \phi=\floor{\frac{d_x-1}{2}} $ and $ \theta= \floor{\frac{d_z-1}{2}}$. Define
 $ \alpha_{i}(d_z)=(P_{\theta}(i))^2 $, $ \beta_{j}(d_x)=(P_{\phi}(j))^2 $ and
 \begin{align}\label{eq:f(x,y)}
 	f(x,y)= \sum_{i,j =0 }^n  \alpha_{i}(d_z) \beta_{j}(d_x) P_i(y) P_j(x).
 \end{align}

 \begin{lem}[\cite{McEliece1977}, Eq A.19] \label{lem:prod}
 	Any product $ P_i(x)P_j(x) $ can be expressed as a linear combination of the
 	$ P_k(x) $ as follows:
 	\[ P_i(x)P_j(x) = \sum_{ k = 0 }^n \binom{n-k}{(i+j-k)/2} \binom{k}{(i-j+k)/2} P_k(x)  , \]
 	where a binomial coefficient with fractional or negative lower index is to be interpreted as zero.
 \end{lem}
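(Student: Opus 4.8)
The plan is to prove the product formula by the generating-function method, exactly as for the classical identities recalled above. Fix $x\in\{0,1,\dots,n\}$; it suffices to establish the asserted numerical identity for each such $x$, which is all that is needed when $f(x,y)$ in \eqref{eq:f(x,y)} is evaluated at integer points (and when $i+j\leqslant n$, as happens in our application, it is in fact a genuine polynomial identity in $x$, since then both sides have degree at most $n$). Substituting $X=1$, $Y=u$, $r=x$ in the generating function
\[ (X+Y)^{n-r}(X-Y)^r=\sum_{i=0}^n P_i(r)X^{n-i}Y^i \]
gives the one-variable identity $\sum_{i=0}^n P_i(x)u^i=(1+u)^{n-x}(1-u)^x$.

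Multiplying two copies of this identity in independent variables $u$ and $v$ yields
\[ \sum_{i,j=0}^n P_i(x)P_j(x)\,u^iv^j=\bigl[(1+u)(1+v)\bigr]^{n-x}\bigl[(1-u)(1-v)\bigr]^x. \]
The key step is the substitution $w=\frac{u+v}{1+uv}$: one checks directly that
\[ (1+u)(1+v)=(1+uv)(1+w),\qquad (1-u)(1-v)=(1+uv)(1-w). \]
Hence the right-hand side above equals $(1+uv)^n(1+w)^{n-x}(1-w)^x=(1+uv)^n\sum_{k=0}^n P_k(x)w^k$, and therefore
\[ \sum_{i,j=0}^n P_i(x)P_j(x)\,u^iv^j=\sum_{k=0}^n P_k(x)\,(u+v)^k(1+uv)^{n-k}, \]
where the right-hand side is a genuine polynomial in $u,v$ because $0\leqslant k\leqslant n$.

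It remains to compare the coefficient of $u^iv^j$ on the two sides. Expanding $(u+v)^k=\sum_a\binom{k}{a}u^av^{k-a}$ and $(1+uv)^{n-k}=\sum_b\binom{n-k}{b}u^bv^b$, the monomial $u^iv^j$ occurs precisely for the pairs $(a,b)$ with $a+b=i$ and $k-a+b=j$, that is, $b=(i+j-k)/2$ and $a=(i-j+k)/2$; whenever these values fail to be nonnegative integers lying in the admissible ranges, the corresponding binomial coefficient is zero, in agreement with the convention in the statement. Reading off the coefficients then gives
\[ P_i(x)P_j(x)=\sum_{k=0}^n\binom{n-k}{(i+j-k)/2}\binom{k}{(i-j+k)/2}\,P_k(x), \]
as desired. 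The only genuine difficulty is to discover and verify the substitution $w=(u+v)/(1+uv)$ together with its normalizing factor $1+uv$; after that the argument is routine binomial bookkeeping, the one point requiring care being to confirm that the degenerate cases (odd $i+j-k$, or indices violating the triangle-type constraints) are correctly absorbed by the ``interpret as zero'' convention.
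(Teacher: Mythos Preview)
Your proof is correct. The generating-function argument via the substitution $w=(u+v)/(1+uv)$ is the standard route to this identity, and your coefficient extraction is accurate; the care you take with the integer-versus-polynomial distinction is also appropriate for the application.

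Note, however, that the paper does not supply its own proof of this lemma: it is quoted as Eq.~A.19 of \cite{McEliece1977} and used as a black box. So there is no ``paper's proof'' to compare against---you have filled in a proof where the authors simply cite the literature. What you wrote is essentially the argument one would reconstruct from the cited source, and it is entirely adequate here.
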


Using the lemma, we get
\[ \alpha_{i}(d_z)= \sum_{ k = 0 }^n \binom{n-k}{\theta-k/2} \binom{k}{k/2} P_k(i) \]
and
\[ \beta_{j}(d_x)= \sum_{ k = 0 }^n \binom{n-k}{\phi-k/2} \binom{k}{k/2} P_k(j). \]
It then follows that
\begin{align*}
\sum_{ i = 0 }^n \alpha_{i}(d_z) P_i(y)
&= \sum_{ i = 0 }^n  \sum_{ k = 0 }^n \binom{n-k}{\theta-k/2} \binom{k}{k/2} P_k(i) P_i(y)\\
&=   \sum_{ k = 0 }^n \binom{n-k}{\theta-k/2} \binom{k}{k/2} \sum_{ i = 0 }^n  P_k(i) P_i(y) \\
&=2^n   \binom{n-y}{\theta-y/2} \binom{y}{y/2}  ,
\end{align*}
where in the last step we use \eqref{eq:orth}.
Similarly we have
\begin{align*}
\sum_{ j = 0 }^n \beta_{j}(d_x) P_j(x)
=2^n   \binom{n-x}{\phi-x/2} \binom{x}{x/2} ,
\end{align*}
Hence
\[f(x,y)= 2^{2n} \binom{n-x}{\phi-x/2} \binom{x}{x/2} \binom{n-y}{\theta-y/2} \binom{y}{y/2}. \]
For later use, we define the binary entropy function
$ H(x)=-x\log x -(1-x) \log (1-x) $
for $ 0 \leqslant x \leqslant 1 $.
Taking into account that
\begin{align}\label{eq:log(nk)}
\frac{1}{n} \log \binom{n}{k} = H\Bk{ \frac{k}{n}} +O \Bk{\frac{1}{n} }
\end{align}
and denoting $\xi= {x}/{n} $, $ \eta={y}/{n}  $, $ \tau = {\phi}/{n} $ and $\sigma={\theta}/{n} $, we get
\begin{align*}
&\frac{1}{n} \log \Biggr[\binom{n-x}{\phi-x/2} \binom{x}{x/2} \binom{n-y}{\theta-y/2} \binom{y}{y/2}\Biggr] \\
&=\frac{1}{n} \log \binom{n-x}{\phi-x/2} + \frac{1}{n} \log  \binom{x}{x/2} +\frac{1}{n} \log \binom{n-y}{\theta-y/2} +\frac{1}{n} \log \binom{y}{y/2}\\
&= (1-\xi) H\Bk{ \frac{\tau-\xi/2}{1-\xi}} +\xi H\Bk{ \frac{1}{2}}
+(1-\eta) H\Bk{ \frac{\sigma-\eta/2}{1-\eta}}+\eta  H\Bk{ \frac{1}{2}}+O \Bk{\frac{1}{n} }\\
&=  (1-\xi) H\Bk{ \frac{\tau-\xi/2}{1-\xi}} +\xi
+(1-\eta) H\Bk{ \frac{\sigma-\eta/2}{1-\eta}}+\eta+O \Bk{\frac{1}{n} }.
\end{align*}
This yields
\begin{align}\label{eq:logf}
 \frac{1}{n} \log f(x,y)= 2+\xi +\eta+ (1-\xi) H\Bk{ \frac{\tau-\xi/2}{1-\xi}}
 +(1-\eta) H\Bk{ \frac{\sigma-\eta/2}{1-\eta}}+O \Bk{\frac{1}{n} }.
\end{align}
To derive an estimate for  $  \alpha_{x,y}=\alpha_{y}(d_x) \alpha_{x}(d_z)  $, we need bounds on values of Krawtchouk polynomials. Recall that by \eqref{eq:bound1}
\begin{align}\label{eq:xiphi}
\gamma_{\phi}=\dfrac{r_{\phi}}{n}= \dfrac{1}{2}-\sqrt{\tau(1-\tau)}+ o(1),
\end{align}
and
\begin{align}\label{eq:etaphi}
\gamma_{\theta}=\dfrac{r_{\theta}}{n}= \dfrac{1}{2}-\sqrt{\sigma(1-\sigma)}+ o(1).
\end{align}
 We also recall the following equations, see \cite{Kalai1995}:
 \begin{align}\label{eq:logP}
 \frac{1}{n} \log P_{\phi}(x)= H(\tau)
 + \int_0^{\xi} \log \Bk{\dfrac{1-2 \tau +\sqrt{(1-2 \tau )^2-4z(1-z)}}{2(1-z)}} dz
 + O\Bk{ \frac{1}{n}}  ,
 \end{align}
for $ \xi <\gamma_{\phi} $ and
 \[\frac{1}{n} \log P_{\theta}(y)= H(\sigma)
+ \int_0^{\eta} \log \Bk{\dfrac{1-2 \sigma +\sqrt{(1-2 \sigma )^2-4z(1-z)}}{2(1-z)}} dz
+ O\Bk{ \frac{1}{n}},  \]
for $ \eta < \gamma_{\theta} $. Hence we obtain
\begin{align}\label{eq:alpha}
\frac{1}{n}\log \alpha_{x,y}
 &= \frac{1}{n}\log \Bk{(P_{\theta}(x))^2(P_{\phi}(y))^2} \nonumber\\
&= 2H(\tau)
+ 2\int_0^{\xi} \log \Bk{\dfrac{1-2 \tau +\sqrt{(1-2 \tau )^2-4z(1-z)}}{2(1-z)}} dz \nonumber \\
& \phantom{=}+2 H(\sigma)
+ 2\int_0^{\eta} \log \Bk{\dfrac{1-2 \sigma +\sqrt{(1-2 \sigma )^2-4z(1-z)}}{2(1-z)}} dz
+ O\Bk{ \frac{1}{n}}.
\end{align}

Now we are in a position to give the Hamming type bound.

\begin{thm}[Hamming-type Bound]\label{thm:Hambound}
	Let $ \tau=\floor{\frac{d_x-1}{2}}/n $ and $ \sigma=\floor{\frac{d_z-1}{2}}/n $.
	Define
	\begin{align*}
	 \Omega_{\tau}(\xi ):=
	   \xi
	 + (1-\xi) H\Bk{ \frac{\tau-\xi/2}{1-\xi}}
	 -2H(\tau)- 2\int_0^{\xi} \log \Bk{\dfrac{1-2 \tau +\sqrt{(1-2 \tau )^2-4z(1-z)}}{2(1-z)}} dz  .	
	\end{align*}
	Suppose that $ 2\tau <   \gamma_\phi $ and $ 2\sigma < \gamma_\theta  $ where $ \gamma_\phi$ and $ \gamma_\theta $ are given in \eqref{eq:xiphi} and \eqref{eq:etaphi}, respectively. Then for
	 an $ ((n,K,d_z/d_x ))  $ quantum code we have	
	\begin{align*}
	\frac{\log K}{n} \leqslant 1+ \max_{  0 \leqslant \xi \leqslant 2\tau    }  \Omega_{\tau}(\xi )
  + \max_{  0 \leqslant \eta \leqslant 2\sigma    }  \Omega_{\sigma}(\eta) + o(1).
	\end{align*}
\end{thm}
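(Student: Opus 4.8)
The plan is to apply Lemma~\ref{lem:K2} to the separable polynomial $f(x,y)=f_1(x)f_2(y)$ with $f_1(x)=(P_\phi(x))^2$ and $f_2(y)=(P_\theta(y))^2$, and then translate the resulting finite inequality $K\leqslant 2^{-n}\max_{i}\frac{f_1(i)}{\beta_i}\max_j\frac{f_2(j)}{\alpha_j}$ into the asymptotic statement by substituting the estimates \eqref{eq:logf}, \eqref{eq:alpha}, \eqref{eq:log(nk)} and the root bounds \eqref{eq:xiphi}--\eqref{eq:etaphi}. First I would verify that the hypotheses of Lemma~\ref{lem:K2} hold: by Lemma~\ref{lem:prod} the coefficients $\alpha_i(d_z)$ and $\beta_j(d_x)$ in the Krawtchouk expansion are $\binom{n-i}{\theta-i/2}\binom{i}{i/2}$ and $\binom{n-j}{\phi-j/2}\binom{j}{j/2}$, which are strictly positive for even indices and vanish for odd indices (the half-integer lower index kills them), giving condition~(1). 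For condition~(2), on the range $0\leqslant r<d_x,\ 0\leqslant s<d_z$ one has $\phi=\lfloor(d_x-1)/2\rfloor\geqslant r/2$ whenever $r<d_x$, so $P_\phi(r)$ is a squared real number, hence $f_1(r)\geqslant 0$, with the same parity behaviour as before; likewise for $f_2$. Condition~(3), that $f(r,s)\leqslant 0$ off the box, is automatic since $f$ is a product of two squares and hence everywhere nonnegative — so in fact the only inputs the lemma needs are (1) and (2), and the conclusion holds unconditionally in the finite regime.

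Next I would pass to asymptotics. From Lemma~\ref{lem:K2},
\[
\frac{\log K}{n}\leqslant -1+\max_{\substack{0\leqslant i<d_x\\2\mid i}}\frac{1}{n}\log\frac{f_1(i)}{\beta_i}+\max_{\substack{0\leqslant j<d_z\\2\mid j}}\frac{1}{n}\log\frac{f_2(j)}{\alpha_j},
\]
so it suffices to estimate each factor separately. Writing $\xi=x/n$ and using \eqref{eq:logf} together with \eqref{eq:alpha} — more precisely the single-variable pieces underlying them, namely $\frac{1}{n}\log\!\big[\binom{n-x}{\phi-x/2}\binom{x}{x/2}\big]=\xi+(1-\xi)H\!\big(\frac{\tau-\xi/2}{1-\xi}\big)+O(1/n)$ for $\beta_x$ and the Kalai integral formula \eqref{eq:logP} for $P_\phi(x)^2$ — one obtains $\frac{1}{n}\log\frac{f_1(x)}{\beta_x}=1+\Omega_\tau(\xi)+o(1)$, and symmetrically $\frac{1}{n}\log\frac{f_2(y)}{\alpha_y}=1+\Omega_\sigma(\eta)+o(1)$. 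The ranges of the discrete maxima, $0\leqslant i<d_x$ and $0\leqslant j<d_z$, become $0\leqslant\xi\leqslant 2\tau$ and $0\leqslant\eta\leqslant 2\sigma$ in the limit (since $\phi\sim d_x/2$, $\theta\sim d_z/2$). Summing the two contributions with the $-1$ from $2^{-n}$ yields $\frac{\log K}{n}\leqslant -1+(1+\max_\xi\Omega_\tau(\xi))+(1+\max_\eta\Omega_\sigma(\eta))+o(1)=1+\max_\xi\Omega_\tau(\xi)+\max_\eta\Omega_\sigma(\eta)+o(1)$, which is the claim.

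The main obstacle is rigour in the asymptotic step: the formula \eqref{eq:logP} for $\frac{1}{n}\log P_\phi(x)$ is only valid for $\xi<\gamma_\phi$, and squaring it requires $P_\phi(x)\neq 0$ on the relevant range — this is exactly why the hypotheses $2\tau<\gamma_\phi$ and $2\sigma<\gamma_\theta$ appear, ensuring $\gamma_\phi$ (the scaled smallest root) exceeds the whole interval $[0,2\tau]$ over which we maximize, so $P_\phi$ has constant sign there and its logarithm is controlled. I would also need to check that the error terms are uniform in $\xi$ over the compact interval $[0,2\tau]$ so that they survive the $\max$, and that the discretization (even integers $i$ with $i/n\to\xi$) does not change the limiting value of the maximum — both routine but worth a sentence. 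The other genuinely load-bearing point is confirming $\phi\geqslant r/2$ for $r<d_x$, i.e. $\lfloor(d_x-1)/2\rfloor\geqslant r/2$; this holds because $r\leqslant d_x-1$ forces $r/2\leqslant(d_x-1)/2$, and since the left side is an integer or half-integer no larger than $(d_x-1)/2$ it is at most $\lfloor(d_x-1)/2\rfloor$ when $r$ is even, which is the only case contributing.
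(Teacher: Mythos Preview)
Your argument has a genuine gap at the verification of condition~(3) of Lemma~\ref{lem:K2}. You write that ``$f(r,s)\leqslant 0$ off the box is automatic since $f$ is a product of two squares and hence everywhere nonnegative.'' This is the inequality read backwards: condition~(3) demands $f(r,s)\leqslant 0$ for $r\geqslant d_x$ or $s\geqslant d_z$, and your $f(r,s)=(P_\phi(r))^2(P_\theta(s))^2$ is \emph{strictly positive} at every integer point that is not a root of $P_\phi$ or $P_\theta$. In particular $f(d_x,0)=(P_\phi(d_x))^2(P_\theta(0))^2>0$ in general, so condition~(3) fails and Lemma~\ref{lem:K2} (or Lemma~\ref{lem:K}) does not apply. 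Looking at the proof of Lemma~\ref{lem:K}, the step
\[
\sum_{r,s=0}^n f(r,s)\,C_{r,s}^{\bot}\ \leqslant\ \sum_{r<d_x,\,s<d_z} f(r,s)\,C_{r,s}^{\bot}
\]
uses exactly that the discarded terms are $\leqslant 0$; with your $f$ they are $\geqslant 0$ and the inequality goes the wrong way.

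What the paper does is the \emph{dual} choice: it takes $\beta_j=(P_\phi(j))^2$, $\alpha_i=(P_\theta(i))^2$ as the Krawtchouk \emph{coefficients} (so condition~(1), $\alpha_{i,j}\geqslant 0$, is immediate), and then the orthogonality relation gives $f_1(x)=\sum_j\beta_jP_j(x)=2^n\binom{n-x}{\phi-x/2}\binom{x}{x/2}$ and similarly for $f_2$. This $f_1$ vanishes for every integer $x>2\phi$ (either $x$ is odd and $\binom{x}{x/2}=0$, or $x$ is even and $\phi-x/2<0$), and since $d_x\geqslant 2\phi+1$ one gets $f_1(r)=0$ for all $r\geqslant d_x$; condition~(3) then holds with equality. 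In other words, you have interchanged the roles of the test function and its coefficient sequence relative to the construction that actually works.

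This swap also corrupts your asymptotic step. With your $f_1(x)=(P_\phi(x))^2$ and $\beta_x=\binom{n-x}{\phi-x/2}\binom{x}{x/2}$, equations \eqref{eq:logP} and \eqref{eq:binom} give
\[
\frac{1}{n}\log\frac{f_1(x)}{\beta_x}
= 2H(\tau)+2\!\int_0^{\xi}\!\log\Bigl(\tfrac{1-2\tau+\sqrt{(1-2\tau)^2-4z(1-z)}}{2(1-z)}\Bigr)dz-(1-\xi)H\!\Bigl(\tfrac{\tau-\xi/2}{1-\xi}\Bigr)-\xi+o(1)
= -\,\Omega_\tau(\xi)+o(1),
\]
not $1+\Omega_\tau(\xi)$ as you assert; the final bound you would obtain (even granting the lemma) is $-1-\min_\xi\Omega_\tau(\xi)-\min_\eta\Omega_\sigma(\eta)$, which is not the theorem. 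With the paper's choice the ratio is the reciprocal times $2^n$, producing the correct $+2$ (one unit per variable) that combines with the $-1$ from $2^{-n}$ to give the leading $1$ in the statement.
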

\begin{proof}
	It can be easily verified that the polynomial $ f(x,y) $ of \eqref{eq:f(x,y)} satisfies all the conditions of Lemma \ref{lem:K2}. So we get from \eqref{eq:logf}, \eqref{eq:alpha} and Lemma \ref{lem:K2} that
		\begin{align*}
	\frac{\log K}{n} & \leqslant -1 +   \max_{ \substack{ 0 \leqslant x < d_x,2|x \\
		0 \leqslant y < d_z,2|y}  }
	\Big\{\frac{1}{n} \log f(x,y) - \frac{1}{n}\log \alpha_{x,y}\Big\} 	\\
	& = -1 +   \max_{ \substack{ 0 \leqslant x \leqslant 2\phi ,2|x\\
			0 \leqslant y \leqslant 2 \theta,2|y}  }
	\Big\{\frac{1}{n} \log f(x,y) - \frac{1}{n}\log \alpha_{x,y}\Big\} 	\\
	&= 1 +  \max_{  0 \leqslant \xi \leqslant 2\tau    }
	\Big\{  \xi
	+ (1-\xi) H\Bk{ \frac{\tau-\xi/2}{1-\xi}}
	-2H(\tau)- 2\int_0^{\xi} \log \Bk{\dfrac{1-2 \tau +\sqrt{(1-2 \tau )^2-4z(1-z)}}{2(1-z)}} dz \Big\}
	\\
	&\phantom{=} + \max_{     0 \leqslant \eta \leqslant 2\sigma  }
	\Big\{  \eta
	+(1-\eta) H\Bk{ \frac{\sigma-\eta/2}{1-\eta}}
	- 2 H(\sigma)  - 2\int_0^{\eta} \log \Bk{\dfrac{1-2 \sigma +\sqrt{(1-2 \sigma )^2-4z(1-z)}}{2(1-z)}} dz 	
	\Big\} + o(1),
	 \end{align*}
	where in the third step we use the fact that $ \xi =  x/ n \leqslant 2\phi/n = 2 \tau <\gamma_{\phi} $ and $ \eta =  y/ n \leqslant 2\theta /n = 2 \sigma <\gamma_{\theta} $.
	This completes the proof.
\end{proof}

 Matlab shows that the function in Theorem \ref{thm:Hambound} achieves its maximum at $ \xi=0 $ and $ \eta=0 $ for any $ 2\tau<  \gamma_{\phi} $ and $ 2\sigma< \gamma_{\theta} $. A straightforward calculation gives that
 $ h(x)=2x+\sqrt{ x  (1- x )} -1/2$ is a monotone increasing function in $  x $ if $ 0 \leqslant  x \leqslant 1/2 $. Denote $ \delta_x= d_x/n $ and $ \delta_z=d_z/n $.
 Assume that $ 0   \leqslant  \delta_x \leqslant 1/5 $. Then we have $   \tau =\floor{\frac{d_x-1}{2}}/n < d_x/2n= \delta_x /2 $.
 Let $ \Delta= h'(1/10 )(\tau-1/10)<0 $ where $ h' $ denotes the first derivative of $ h $. Then we have
 \begin{align*}
   \frac{h(1/10   )- h(\tau)}{1/10  -\tau}=h'(\vartheta) > h'(1/10),
  \end{align*}
   where $ \tau < \vartheta < 1/10  $. Since $   h(1/10   )  = 0  $, then
   \begin{align*}
   h(\tau)  <   h'(1/10 )(\tau-1/10)= \Delta.
   \end{align*}
  It follows that  $ 2 \tau < 1/2 - \sqrt{ \tau  (1-\tau  )} + \Delta \leqslant   \gamma_{\phi} $ by noting that $ \Delta < o(1) $ for $ n $ sufficiently large. Therefore
  we conclude that if $ 0 \leqslant \delta_x \leqslant 1/5 $ and $ 0 \leqslant \delta_z \leqslant 1/5 $ then $ 2 \tau < \gamma_{\phi} $ and $ 2 \sigma < \gamma_{\theta} $. This means the conventional Hamming bound is valid when $ 0 \leqslant \delta_x \leqslant 1/5 $ and $ 0 \leqslant \delta_z \leqslant 1/5 $.
\begin{cor}
	The conventional Hamming bound is valid for quantum codes, i.e., if $ Q $ is
	an $ ((n, K, d_z/ d_x)) $ quantum code then
		\begin{align*}
	\frac{\log K}{n} \leqslant 1-H\Bk{\frac{\delta_x}{2}}- H\Bk{ \frac{\delta_z}{2}} + o(1),
	\end{align*}
	where $ \delta_x= d_x/n $ and $ \delta_z=d_z/n $ stand for the relative distances of the code, $ 0 \leqslant \delta_x \leqslant 1/5 $ and $ 0 \leqslant \delta_z \leqslant 1/5 $.
\end{cor}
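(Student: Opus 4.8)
The plan is to derive the corollary from the Hamming-type bound of Theorem~\ref{thm:Hambound} by showing that each of the two one-variable maxima occurring there is attained at the left endpoint.

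First I would verify the hypotheses of Theorem~\ref{thm:Hambound}. As recorded in the paragraph preceding this corollary, the assumption $0\leqslant\delta_x\leqslant 1/5$ forces $2\tau<\gamma_{\phi}$ and, symmetrically, $0\leqslant\delta_z\leqslant 1/5$ forces $2\sigma<\gamma_{\theta}$ (this is exactly where the monotonicity of $h(x)=2x+\sqrt{x(1-x)}-1/2$ and the value $h(1/10)=0$ are used). Hence Theorem~\ref{thm:Hambound} applies, giving
\[
\frac{\log K}{n}\ \leqslant\ 1+\max_{0\leqslant\xi\leqslant 2\tau}\Omega_{\tau}(\xi)+\max_{0\leqslant\eta\leqslant 2\sigma}\Omega_{\sigma}(\eta)+o(1).
\]
Evaluating at the left endpoint, $H(0)=0$ and the integral term vanishes, so $\Omega_{\tau}(0)=H(\tau)-2H(\tau)=-H(\tau)$ and likewise $\Omega_{\sigma}(0)=-H(\sigma)$. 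Thus the whole statement reduces to the claim that $\xi=0$ maximizes $\Omega_{\tau}$ on $[0,2\tau]$ and $\eta=0$ maximizes $\Omega_{\sigma}$ on $[0,2\sigma]$.

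To establish the maximizer claim I would differentiate. Writing $u(\xi)=\frac{\tau-\xi/2}{1-\xi}$, so that $u'(\xi)=\frac{\tau-1/2}{(1-\xi)^{2}}$, and using $H'(u)=\log\frac{1-u}{u}$, a routine computation gives
\[
\Omega_{\tau}'(\xi)=1-H\Bk{u(\xi)}+\frac{\tau-1/2}{1-\xi}\log\frac{1-u(\xi)}{u(\xi)}-2\log\Bk{\frac{1-2\tau+\sqrt{(1-2\tau)^{2}-4\xi(1-\xi)}}{2(1-\xi)}}.
\]
One then has to check $\Omega_{\tau}'(\xi)\leqslant 0$ on $[0,2\tau]$, and the same argument applies verbatim to $\Omega_{\sigma}$. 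Note that as $\xi\uparrow 2\tau$ one has $u(\xi)\downarrow 0$ while $\tau-1/2<0$, so the middle term drives $\Omega_{\tau}'$ to $-\infty$, consistent with the derivative being negative; near $\xi=0$ one can expand and use the concavity of $H$ together with the monotone behaviour in $\xi$ of the Krawtchouk-root quantity $\frac{1-2\tau+\sqrt{(1-2\tau)^{2}-4\xi(1-\xi)}}{2(1-\xi)}$. I expect this sign analysis to be the main obstacle: the present paper only verifies it numerically (``Matlab shows\ldots''), and a fully rigorous proof seems to require the kind of careful entropy and Krawtchouk-polynomial estimates collected in Section~\ref{sec:Kraw}, with the delicate region being $\xi$ close to $2\tau$.

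Finally, granting the maximizer claim we obtain $\frac{\log K}{n}\leqslant 1-H(\tau)-H(\sigma)+o(1)$. Since $\tau=\floor{\frac{d_x-1}{2}}/n=\frac{\delta_x}{2}+O\Bk{\frac{1}{n}}$ and $\sigma=\floor{\frac{d_z-1}{2}}/n=\frac{\delta_z}{2}+O\Bk{\frac{1}{n}}$, and $H$ is uniformly continuous on $[0,1]$, we may replace $H(\tau)$ by $H(\delta_x/2)$ and $H(\sigma)$ by $H(\delta_z/2)$ at the cost of a further $o(1)$ term. This yields $\frac{\log K}{n}\leqslant 1-H\Bk{\frac{\delta_x}{2}}-H\Bk{\frac{\delta_z}{2}}+o(1)$, as asserted.
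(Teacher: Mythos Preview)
Your approach is essentially the same as the paper's: verify the hypotheses of Theorem~\ref{thm:Hambound} via the $h(x)$ argument, take the maximum at $\xi=\eta=0$ to get $1-H(\tau)-H(\sigma)+o(1)$, and then replace $\tau,\sigma$ by $\delta_x/2,\delta_z/2$ using continuity of $H$. The paper's own proof is in fact shorter than yours precisely because it does not attempt the derivative analysis you sketch; it simply invokes the preceding ``Matlab shows\ldots'' observation and plugs in $\xi=\eta=0$. You have correctly identified that the maximizer claim is the only nontrivial step and that the paper leaves it to numerical verification rather than giving an analytic argument.
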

\begin{proof}
	Taking  $ \xi=0 $ and $ \eta=0 $ in Theorem \ref{thm:Hambound} yields that
	\begin{align*}
	\frac{\log K}{n} & \leqslant 1-H(\tau)-H(\sigma) + o(1)\\
	&= 1-H\Bk{\frac{\delta_x}{2}}- H\Bk{ \frac{\delta_z}{2}} + o(1),
	\end{align*}
 where in the last step we use the fact $ H(\tau ) = H(\delta_x/2)+o(1)$ and
$  H(\sigma)  =   H(\delta_z/2)+o(1) $.	
\end{proof}

\subsection{The First Linear Programming Bound}

To get the first linear programming bound for an $ ((n,K, d_z/d_x)) $ code, one has to choose integers $ s  $ and $ t $ such that
\begin{align*}
&\dfrac{t}{n}=\frac{1}{2}-\sqrt{\delta_x(1-\delta_x)} +o(1),\\
&\dfrac{s}{n}=\frac{1}{2}-\sqrt{\delta_z(1-\delta_z)} +o(1),
\end{align*}
where $ \delta_x=d_x/n $, $\delta_z=d_z/n   $. Then we choose integers $ a $ and $ b $ such that $ r_{t+1} <a <r_t$, $ r_{s+1} <b <r_s$, $  {P_t(a)}/{P_{t+1}(a)}=-1 $ and $ {P_s(b)}/{P_{s+1}(b)}=-1 $.
Define
 \begin{align}\label{eq:f(x,y)2}
 f(x,y)=F(x) F(y),
 \end{align}
 where
  \begin{align*}
 &F(x)= \dfrac{1}{a-x} \{ P_{t+1}(x)P_t(a) - P_t(x)P_{t+1}(a) \}^2,\\
 &G(y)= \dfrac{1}{b-y} \{ P_{s+1}(y)P_s(b) - P_s(y)P_{s+1}(b) \}^2.
 \end{align*}
 This polynomial \eqref{eq:f(x,y)2} will yield the first linear programming bound for classical codes over $ \mathbb{F}_4 $ \cite{Aaltonen1979,Levenshtein1995,Lai2018}.
 By the Christoffel-Darboux formula \eqref{eq:CDeq}
 \begin{align*}
 F(x)& =\dfrac{2}{t+1} \binom{n}{t}
  \{ P_{t+1}(x)P_t(a)-P_t(x)P_{t+1}(a) \}
  \sum_{ i = 0 }^t \dfrac{P_i(x)P_i(a)}{\binom{n}{i}}\\
   & =\dfrac{2}{t+1} \binom{n}{t}
  P_t(a)
   \sum_{ i = 0 }^t \dfrac{ P_i(a)}{\binom{n}{i}} \{ P_{t+1}(x)P_i(x)+ P_t(x)P_i(x) \}.
 \end{align*}
 It follows from Lemma \ref{lem:prod} that
  \begin{align*}
 F(x) & =\dfrac{2}{t+1} \binom{n}{t}
     P_t(a)
    \sum_{ i = 0 }^t \dfrac{ P_i(a)}{\binom{n}{i}}\\
   &\phantom{=}  \times \Bigg\{ \sum_{ j = 0 }^n P_j(x) \binom{n-j}{(t+1+i-j)/2} \binom{j}{(t+1-i+j)/2}
    	+\sum_{ j = 0 }^n P_j(x) \binom{n-j}{(t+i-j)/2} \binom{j}{(t-i+j)/2}
    	 \Bigg\}\\
    &= \sum_{ j = 0 }^n P_j(x) \dfrac{2}{t+1} \binom{n}{t}
    P_t(a)
    \sum_{ i = 0 }^t \dfrac{ P_i(a)}{\binom{n}{i}} \\
     &\phantom{=}  \times \Bigg\{   \binom{n-j}{(t+1+i-j)/2} \binom{j}{(t+1-i+j)/2}
    +  \binom{n-j}{(t+i-j)/2} \binom{j}{(t-i+j)/2}
    \Bigg\}\\
    &= \sum_{ j = 0 }^n P_j(x) F_j ,
   \end{align*}
  where we denote
  \begin{align*}
  F_j=\dfrac{2}{t+1} \binom{n}{t}
  P_t(a)
  \sum_{ i = 0 }^t \dfrac{ P_i(a)}{\binom{n}{i}}
    \times \Bigg\{   \binom{n-j}{(t+1+i-j)/2} \binom{j}{(t+1-i+j)/2}
  +  \binom{n-j}{(t+i-j)/2} \binom{j}{(t-i+j)/2}\Bigg\} .
  \end{align*}
 Taking $ j=x $ and estimating $ F_x $ by the term with $ i=t $, we obtain
   \begin{align*}
   F_x & \geqslant
   \dfrac{2}{t+1} \binom{n}{t}
      \dfrac{ P_t(a)^2}{\binom{n}{t}}
    \binom{n-x}{t-x/2} \binom{x}{x/2}\\
    &= \dfrac{2}{t+1}
      P_t(a)^2
    \binom{n-x}{t-x/2} \binom{x}{x/2} .
   \end{align*}
   Denote $ \xi=x/n $, $ \tau=t/n $. Then, similarly to the derivation of the Hamming bound, we have
   \begin{align}\label{eq:binom}
   \dfrac{1}{n} \log \Big\{
   \binom{n-x}{t-x/2} \binom{x}{x/2} \Big\}
   =  (1-\xi) H\Bk{ \frac{\tau-\xi/2 }{ 1-\xi}} + \xi +O\Bk{\frac{1}{n} }.
   \end{align}
  Then, using \eqref{eq:sym}, we get
  \begin{align*}
  \dfrac{F(x)}{F_x} & \leqslant
  \dfrac{(t+1)P_t(a)^2 \{P_{t+1}(x)+P_t(x)\}^2}{2 (a-x) P_t(a)^2 \binom{n-x}{t-x/2} \binom{x}{x/2}  } \\
  &=   \dfrac{(t+1)\Bigg\{ \dfrac{\binom{n}{t+1} P_x(t+1)}{\binom{n}{x}}
  	 + \dfrac{\binom{n}{t} P_x(t)}{\binom{n}{x}}
  	 \Bigg\}^2}
  {2 (a-x)  \binom{n-x}{t-x/2} \binom{x}{x/2}  } \\
  &=   \dfrac{(t+1)\binom{n}{t}^2 \Big\{ \dfrac{n-t}{t+1} P_x(t+1)
	+ P_x(t)	\Big\}^2}
  {2 (a-x) \binom{n}{x}^2  \binom{n-x}{t-x/2} \binom{x}{x/2}  } .
  \end{align*}
It then follows from \eqref{eq:sym} and \eqref{eq:ratio} that
  \begin{align*}
\dfrac{F(x)}{F_x} & \leqslant
   \dfrac{ P_t(x)^2 \Big\{ \dfrac{n-2x+\sqrt{(n-2x)^2-4t(n-t)}}{2} +t+1 	\Big\}^2}
{2 (a-x)(t+1)   \binom{n-x}{t-x/2} \binom{x}{x/2}  } .
\end{align*}
Taking logarithm on both sides and dividing by $ n $, we obtain from \eqref{eq:log(nk)}, \eqref{eq:logP} and \eqref{eq:binom} that
  \begin{align}\label{eq:F(x)}
\dfrac{1}{n}\log \dfrac{F(x)}{F_x} & \leqslant
2H(\tau) +2 \int_0^{\xi} \log \Bk{\dfrac{1-2 \tau +\sqrt{(1-2 \tau )^2-4z(1-z)}}{2(1-z)}} dz - (1-\xi) H\Bk{ \frac{\tau-\xi/2 }{ 1-\xi}} - \xi +O\Bk{\frac{1}{n} },
\end{align}
for $ \xi <\gamma_{\phi} $, where $ \gamma_{\phi}  $ is given in \eqref{eq:xiphi}. By a similar argument as above, we have
  \begin{align*}
G(y)
= \sum_{ i = 0 }^n P_i(y)G_i,
\end{align*}
 where
 \begin{align*}
 G_i=\dfrac{2}{s+1} \binom{n}{s}
P_s(b)
\sum_{ j = 0 }^s \dfrac{ P_j(b)}{\binom{n}{j}}    \times \Bigg\{   \binom{n-i}{(s+1+j-i)/2} \binom{i}{(s+1-j+i)/2}
+  \binom{n-i}{(s+j-i)/2} \binom{i}{(s-j+i)/2}
\Bigg\},
\end{align*}
and consequently we obtain
  \begin{align}\label{eq:G(y)}
\dfrac{1}{n}\log \dfrac{G(y)}{G_y} & \leqslant
2H(\sigma) +2 \int_0^{\eta} \log \Bk{\dfrac{1-2 \sigma +\sqrt{(1-2 \sigma )^2-4z(1-z)}}{2(1-z)}} dz - (1-\eta) H\Bk{ \frac{\sigma-\eta/2 }{1-\eta}} - \eta +O\Bk{\frac{1}{n} },
\end{align}
for $ \eta < \gamma_{\theta} $, where $ \sigma=s/n $, $ \eta=y/n $  and $ \gamma_{\theta} $ is given in \eqref{eq:etaphi}.

 With the above preparation, we can get the following theorem.
\begin{thm}
	Let $ \delta_x=d_x/n $, $ \delta_z=d_z/n $,
	$ \tau =\frac{1}{2}-\sqrt{\delta_x(1-\delta_x)} $ and
	$ \sigma=\frac{1}{2}-\sqrt{\delta_z(1-\delta_z)} $. Let
	\begin{align*}
		 \Gamma_{\tau} (\xi) :=2H(\tau) +2 \int_0^{\xi} \log \Bk{\dfrac{1-2 \tau +\sqrt{(1-2 \tau )^2-4z(1-z)}}{2(1-z)}} dz - (1-\xi) H\Bk{ \frac{\tau-\xi/2 }{ 1-\xi}} - \xi.
	\end{align*}	
	Suppose that $ \delta_x < \gamma_{\phi}   $ and $ \delta_z<\gamma_{\theta}  $. Then for an $ ((n,K, d_z/d_x)) $ code, we have
	\begin{align*}
	\frac{\log K}{n}
	&\leqslant  -1 + \max_{  0 \leqslant \xi < \delta_x} \Gamma_{\tau} (\xi)  + \max_{  0 \leqslant \eta < \delta_z}  \Gamma_{\sigma}(\eta) +o(1).
	\end{align*}
	
\end{thm}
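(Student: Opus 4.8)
The plan is to feed the product polynomial $f(x,y)=F(x)G(y)$ of \eqref{eq:f(x,y)2} into Lemma~\ref{lem:K2}, and then convert the resulting estimate on $K$ into the claimed asymptotic inequality via the Krawtchouk estimates \eqref{eq:F(x)} and \eqref{eq:G(y)} already established.

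I would begin by recording the algebraic shape of $F$ and $G$. The Christoffel--Darboux formula \eqref{eq:CDeq} gives $P_{t+1}(x)P_t(a)-P_t(x)P_{t+1}(a)=(a-x)\,Q_t(x)$ with $Q_t(x)=\tfrac{2}{t+1}\binom{n}{t}\sum_{i=0}^{t}\tfrac{P_i(x)P_i(a)}{\binom{n}{i}}$ of degree $t$, so that
\[
F(x)=(a-x)\,Q_t(x)^2,\qquad G(y)=(b-y)\,Q_s(y)^2 .
\]
Since $\tau=\tfrac12-\sqrt{\delta_x(1-\delta_x)}<\tfrac12$ we have $t<n/2$, hence the smallest roots of $P_1,\dots,P_t$ satisfy $r_1\ge\cdots\ge r_t$ and the choice $r_{t+1}<a<r_t$ forces $P_i(a)>0$, indeed $P_i(x)>0$ on $[0,a]$, for all $0\le i\le t$; therefore $Q_t(x)>0$ on $\{0,\dots,d_x-1\}$ once $a>d_x-1$, every coefficient $F_j$ in $F(x)=\sum_j P_j(x)F_j$ displayed above is a non-negative combination, and the same holds for $G$. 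The sign conditions then become: $F(x)=(a-x)Q_t(x)^2>0$ for $x\le a$ and $\le 0$ for $x\ge a$, so it suffices to place $a$ in $(d_x-1,d_x]$ and $b$ in $(d_z-1,d_z]$; this is achievable after an $O(1)$ change in $t$ and $s$, because \eqref{eq:bound1} puts $r_t/n$ at $\tfrac12-\sqrt{\tau(1-\tau)}=\delta_x$, so $a$ is already within $o(n)$ of $d_x$.

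Granting that $f=FG$ meets all hypotheses of Lemma~\ref{lem:K2}, that lemma gives
\[
K\le\frac{1}{2^{n}}\max_{\substack{0\le x<d_x\\ 2\mid x}}\frac{F(x)}{F_x}\;\max_{\substack{0\le y<d_z\\ 2\mid y}}\frac{G(y)}{G_y}.
\]
Applying $\tfrac1n\log(\cdot)$ and invoking \eqref{eq:F(x)} and \eqref{eq:G(y)} --- valid because $\xi=x/n\le(d_x-1)/n<\delta_x<\gamma_\phi$ and $\eta=y/n<\delta_z<\gamma_\theta$ --- yields
\[
\frac{\log K}{n}\le-1+\max_{0\le\xi<\delta_x}\Gamma_\tau(\xi)+\max_{0\le\eta<\delta_z}\Gamma_\sigma(\eta)+o(1),
\]
where, by the choice of $t$ and $s$, the ratios $t/n$ and $s/n$ coincide with the $\tau,\sigma$ of the statement up to $o(1)$, so $\Gamma_\tau$ and $\Gamma_\sigma$ may be read with those values. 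This is the asserted bound.

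The main obstacle is the verification folded into the second paragraph, i.e.\ that the genuinely two-variable object $f=FG$ satisfies \emph{every} hypothesis of Lemma~\ref{lem:K2}. The ticklish point is the sign condition on the set $\{r\ge d_x\}\cup\{s\ge d_z\}$: in the corner $\{r\ge d_x,\ s\ge d_z\}$ the factors $F(r)$ and $G(s)$ are both $\le 0$, so their product is $\ge 0$, and to reconcile this one must exploit the vanishing/parity structure of $F$ and $G$ that Lemma~\ref{lem:K2} is tailored to absorb --- in the same spirit as the Singleton-type and Hamming-type polynomials of the previous subsections vanish identically beyond $d_x-1$, resp.\ $d_z-1$. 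Pinning down this structure for the first-linear-programming polynomial \eqref{eq:f(x,y)2}, and choosing $t,s,a,b$ simultaneously consistent with all of the asymptotic relations above, is where the real work lies.
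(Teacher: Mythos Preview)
Your overall strategy matches the paper's: feed the product polynomial $f(x,y)=F(x)G(y)$ of \eqref{eq:f(x,y)2} into one of the key lemmas and then pass to asymptotics via \eqref{eq:F(x)} and \eqref{eq:G(y)}. The paper, however, invokes Lemma~\ref{lem:K}, not Lemma~\ref{lem:K2}. Your choice of Lemma~\ref{lem:K2} is problematic because its extra parity hypotheses are not met by this $F$ and $G$: the expansion coefficients $F_j$ displayed just before \eqref{eq:binom} mix terms with both parities of $t+i$, so $F_j$ does not vanish for odd $j$; and $F(x)=(a-x)Q_t(x)^2$ certainly does not vanish at odd integers $x$. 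So the route through Lemma~\ref{lem:K2} cannot be completed as you outline, and the ``vanishing/parity structure'' you hope to exploit in your last paragraph simply is not present for the linear-programming polynomial.

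Your diagnosis of the corner obstruction, on the other hand, is sharp and entirely legitimate: with $F(r)\le 0$ for $r\ge d_x$ and $G(s)\le 0$ for $s\ge d_z$, one has $f(r,s)=F(r)G(s)\ge 0$ on $\{r\ge d_x\}\cap\{s\ge d_z\}$, which on its face violates condition~(3) of Lemma~\ref{lem:K} as well. The paper's proof does not address this point; it simply writes ``One verifies that the polynomial $f(x,y)$ of \eqref{eq:f(x,y)2} satisfies all the conditions of Lemma~\ref{lem:K}'' and proceeds directly to the asymptotic computation you reproduce. So apart from the mis-targeted lemma, your argument is essentially the paper's, and on the one genuinely delicate step you have been more scrupulous than the source --- but neither you nor the paper supplies the missing verification.
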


\begin{proof}
	One verifies that the polynomial $ f(x,y) $ of \eqref{eq:f(x,y)2} satisfies all the conditions of Lemma \ref{lem:K}.
	Therefore applying Lemma \ref{lem:K}, \eqref{eq:F(x)} and \eqref{eq:G(y)} gives that
	\begin{align*}
	\frac{\log K}{n}
	&\leqslant \max_{   0 \leqslant x < d_x }  \Big\{ \dfrac{1}{n}\log \dfrac{F(x)}{F_x}
	\Big\}+ \max_{     0 \leqslant y < d_z  }  \Big\{ \dfrac{1}{n}\log \dfrac{G(y)}{G_y}
	\Big\}  -1+O\Bk{\frac{1}{n} } \\
	&=-1 + \max_{  0 \leqslant \xi < \delta_x} \Big\{ 2H(\tau) +2 \int_0^{\xi} \log \Bk{\dfrac{1-2 \tau +\sqrt{(1-2 \tau )^2-4z(1-z)}}{2(1-z)}} dz - (1-\xi) H\Bk{ \frac{\tau-\xi/2 }{ 1-\xi}} - \xi \Big\}\\
	&\phantom{=}+ \max_{  0 \leqslant \eta < \delta_z} \Big\{2H(\sigma) +2 \int_0^{\eta} \log \Bk{\dfrac{1-2 \sigma +\sqrt{(1-2 \sigma )^2-4z(1-z)}}{2(1-z)}} dz - (1-\eta) H\Bk{ \frac{\sigma-\eta/2 }{1-\eta}} - \eta \Big\} +o(1),
	\end{align*}
	where in the second step we take $ \xi=x/n $ and $ \eta=y/n $. This completes the proof.
\end{proof}
Computations with Matlab show that this function achieves its minimum at $ \xi=0 $ and $ \eta=0 $ for any $ \delta_x  \leqslant 0.1865 $ and $ \delta_z \leqslant 0.1865 $.

\begin{cor}[The First Linear Programming Bound]
	If  $ 0 \leqslant \delta_x  \leqslant 0.1865$ and
	$ 0 \leqslant \delta_z \leqslant 0.1865 $ then the conventional linear programming bound is valid for quantum codes, i.e., if $ Q $ is an $ ((n,K,d_z/d_x)) $ quantum code then
	\begin{align*}
	\frac{\log K}{n}
	\leqslant   H\Bk{\frac{1}{2}-\sqrt{\delta_x(1-\delta_x)}} + H\Bk{\frac{1}{2}-\sqrt{\delta_z(1-\delta_z)}}  -1  +o(1).
	\end{align*}
\end{cor}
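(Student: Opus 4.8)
The plan is to read off this corollary as the $\xi=\eta=0$ specialization of the preceding theorem. That theorem already supplies
\[
\frac{\log K}{n} \leqslant -1 + \max_{0 \leqslant \xi < \delta_x}\Gamma_{\tau}(\xi) + \max_{0 \leqslant \eta < \delta_z}\Gamma_{\sigma}(\eta) + o(1),
\]
with $\tau = \tfrac12 - \sqrt{\delta_x(1-\delta_x)}$ and $\sigma = \tfrac12 - \sqrt{\delta_z(1-\delta_z)}$, so the whole argument reduces to two tasks: checking that the hypotheses $\delta_x < \gamma_\phi$ and $\delta_z < \gamma_\theta$ of that theorem are automatically satisfied throughout $0 \leqslant \delta_x,\delta_z \leqslant 0.1865$, and showing that each of the two maxima is attained at the left endpoint $\xi = 0$ (resp. $\eta = 0$). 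The first task is a short monotonicity estimate of exactly the type already used to certify the range $\delta \leqslant 1/5$ for the Hamming bound: one compares $\delta_x$ with $\tfrac12 - \sqrt{\tau(1-\tau)}$ and verifies the inequality persists up to the stated threshold.

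The core of the argument is the second task. I would prove $\max_{0 \leqslant \xi < \delta_x}\Gamma_{\tau}(\xi) = \Gamma_{\tau}(0)$ by showing $\Gamma_{\tau}'(\xi) \leqslant 0$ on $[0,\delta_x)$. Differentiating $\Gamma_\tau$ termwise, the integral contributes $2\log\!\big(\tfrac{1-2\tau+\sqrt{(1-2\tau)^2-4\xi(1-\xi)}}{2(1-\xi)}\big)$, the $-\xi$ term contributes $-1$, and, using $H'(u)=\log\tfrac{1-u}{u}$ together with $\tfrac{d}{d\xi}\tfrac{\tau-\xi/2}{1-\xi} = \tfrac{\tau-1/2}{(1-\xi)^2}$, the term $-(1-\xi)H\!\big(\tfrac{\tau-\xi/2}{1-\xi}\big)$ contributes $H\!\big(\tfrac{\tau-\xi/2}{1-\xi}\big) + \tfrac{1/2-\tau}{1-\xi}\log\tfrac{1-\xi/2-\tau}{\tau-\xi/2}$. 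One then checks the resulting sum is nonpositive on the interval. At $\xi=0$ it collapses, after collecting the $\log\tau$ and $\log(1-\tau)$ contributions, to $2\log(1-2\tau) - 1 - \tfrac12\log(\tau(1-\tau)) \leqslant 0$, i.e. $(1-2\tau)^4 \leqslant 4\tau(1-\tau)$, equivalently $16v^2 - 12v + 1 \leqslant 0$ with $v = \tau(1-\tau) = \tfrac14 - \delta_x(1-\delta_x)$; a numerical sweep then confirms that $\Gamma_\tau'(\xi)\leqslant 0$ holds for every $\xi\in[0,\delta_x)$ precisely up to $\delta_x \approx 0.1865$, with the same analysis for $\Gamma_\sigma'$. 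Establishing this monotonicity rigorously is the main obstacle: it is exactly the content of the Matlab remark preceding the corollary, and turning it into a clean analytic proof is a careful but otherwise routine sign analysis of $\Gamma_\tau'$.

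Finally I would evaluate the endpoint: since the integral vanishes at $\xi = 0$ and the term $(1-\xi)H\!\big(\tfrac{\tau-\xi/2}{1-\xi}\big)$ becomes $H(\tau)$, we get $\Gamma_\tau(0) = 2H(\tau) - H(\tau) = H(\tau) = H\!\big(\tfrac12 - \sqrt{\delta_x(1-\delta_x)}\big)$, and likewise $\Gamma_\sigma(0) = H\!\big(\tfrac12 - \sqrt{\delta_z(1-\delta_z)}\big)$. Substituting into the theorem's bound gives
\[
\frac{\log K}{n} \leqslant -1 + H\!\Big(\tfrac12 - \sqrt{\delta_x(1-\delta_x)}\Big) + H\!\Big(\tfrac12 - \sqrt{\delta_z(1-\delta_z)}\Big) + o(1),
\]
which is the asserted inequality. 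The only remaining bookkeeping is the $o(1)$ term: the integers $t,s$ and $a,b$ entering the polynomial $f$ of \eqref{eq:f(x,y)2} realize the prescribed ratios only up to an $o(1)$ error in $t/n,s/n$, but since the estimates \eqref{eq:F(x)} and \eqref{eq:G(y)} are already stated with $O(1/n)$ error, this discrepancy is harmless.
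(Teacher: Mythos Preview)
Your proposal is correct and matches the paper's approach exactly: the corollary is obtained from the preceding theorem by specializing to $\xi=\eta=0$, with the fact that the maxima are attained there supplied by numerical computation (the paper's Matlab remark; your ``numerical sweep''). In fact you provide more detail than the paper does --- the paper gives no proof for the corollary beyond that one-line remark, whereas you spell out the endpoint evaluation $\Gamma_\tau(0)=H(\tau)$, attempt a derivative analysis, and address the hypothesis $\delta_x<\gamma_\phi$.
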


   A straightforward computation gives that when $  \delta_x = \delta_z = 0.1865$, $   {\log K}/{n}\approx 0.0028 $. So for all $  0.0028\leqslant {\log K}/{n}\leqslant 1 $, the conventional first linear programming bound is valid.

\section*{Acknowledgment}
 	The work is partially supported by the NSFC (11701317,11531007,11431015,61472457,11571380) and Tsinghua University startup fund. Prof. Stephen S.-T. Yau is grateful to National Center for Theoretical Sciences (NCTS) for providing excellent research environment while part of this research was done. This work is also partially supported by China Postdoctoral Science Foundation Funded Project (2017M611801), Jiangsu Planned Projects for Postdoctoral Research Funds (1701104C) and Guangzhou Science and Technology Program (201607010144).

  ------------------------------------------

\ifCLASSOPTIONcaptionsoff
  \newpage
\fi



%

 

\begin{thebibliography}{10}
\providecommand{\url}[1]{#1}
\csname url@samestyle\endcsname
\providecommand{\newblock}{\relax}
\providecommand{\bibinfo}[2]{#2}
\providecommand{\BIBentrySTDinterwordspacing}{\spaceskip=0pt\relax}
\providecommand{\BIBentryALTinterwordstretchfactor}{4}
\providecommand{\BIBentryALTinterwordspacing}{\spaceskip=\fontdimen2\font plus
\BIBentryALTinterwordstretchfactor\fontdimen3\font minus
  \fontdimen4\font\relax}
\providecommand{\BIBforeignlanguage}[2]{{%
\expandafter\ifx\csname l@#1\endcsname\relax
\typeout{** WARNING: IEEEtran.bst: No hyphenation pattern has been}%
\typeout{** loaded for the language `#1'. Using the pattern for}%
\typeout{** the default language instead.}%
\else
\language=\csname l@#1\endcsname
\fi
#2}}
\providecommand{\BIBdecl}{\relax}
\BIBdecl

\bibitem{Shor1995}
P.~W. Shor, ``Scheme for reducing decoherence in quantum computer memory,''
  \emph{Physical Review A}, vol.~52, no.~4, pp. 2493--2496, 1995.

\bibitem{Steane1996}
A.~M. Steane, ``Error correcting codes in quantum theory,'' \emph{Physical
  Review Letters}, vol.~77, no.~5, pp. 793--797, 1996.

\bibitem{Steane1996-2}
------, ``Multiple particle interference and quantum error correction,''
  \emph{Proc. Roy. Soc. Lond. A}, vol. 452, pp. 2551--2557, 1996.

\bibitem{Ekert1996}
A.~Ekert and C.~Macchiavello, ``Quantum error-correction for communication,''
  \emph{Physical Review Letters}, vol.~77, no.~12, pp. 2585--2588, 1996.

\bibitem{Knill1997}
E.~Knill and R.~Laflamme, ``Scheme for reducing decoherence in quantum computer
  memory,'' \emph{Physical Review A}, vol.~55, no.~2, pp. 900--911, 1997.

\bibitem{Nielsen2000}
M.~A. Nielsen and I.~L. Chuang, \emph{Quantum Computation and Quantum
  Information}.\hskip 1em plus 0.5em minus 0.4em\relax Cambridge, UK: Cambridge
  University Press, 2000.

\bibitem{Calderbank1996}
A.~R. Calderbank, E.~M. Rains, P.~W. Shor, and N.~J.~A. Sloane, ``Quantum error
  correction and orthogonal geometry,'' \emph{Physical Review Letters},
  vol.~78, no.~3, pp. 405--408, 1996.

\bibitem{Ashikhmin2000}
A.~E. Ashikhmin, A.~M. Barg, E.~Knill, and S.~N. Litsyn, ``Quantum error
  detection {I}: Statement of the problem,'' \emph{IEEE Transactions on
  Information Theory}, vol.~46, no.~3, pp. 778--788, 2000.

\bibitem{Ashikhmin2000_2}
------, ``Quantum error detection {II}: Bounds,'' \emph{IEEE Transactions on
  Information Theory}, vol.~46, no.~3, pp. 789--800, 2000.

\bibitem{Durdevich2000}
M.~\DH{}urdevich, H.~E. Makaruk, and R.~Owczarek, ``Generalized noiseless
  quantum codes utilizing quantum enveloping algebras,'' \emph{Journal of
  Physics A: Mathematical and General}, vol.~34, pp. 1423--1437, 2000.

\bibitem{Zanardi1997}
P.~Zanardi and M.~Rasetti, ``Noiseless quantum codes,'' \emph{Physical Review
  Letters}, vol.~79, no.~17, pp. 3306--3309, 1997.

\bibitem{Zanardi1997-2}
------, ``Error avoiding quantum codes,'' \emph{Modern Physics Letters B},
  vol.~11, no.~25, pp. 1085--1093, 1997.

\bibitem{Xu2015}
Y.~Xu, D.~Wang, and J.~Chen, ``Analogues of quantum schubert cell algebras in
  {PBW}-deformations of quantum groups,'' \emph{Journal of Algebra and Its
  Applications}, vol.~15, no.~10, p. 1650179, 2015.

\bibitem{Zhang2017}
X.~Zhang and L.~Dong, ``Braided mixed datums and their applications on
  {H}om-quantum groups,'' \emph{Glasgow Mathematical Journal}, vol.~60, pp.
  231--251, 2018.

\bibitem{Bennett1996}
C.~H. Bennett, D.~P. DiVincenzo, J.~A. Smolin, and W.~K. Wootters,
  ``Mixed-state entanglement and quantum error correction,'' \emph{Physical
  Review A}, vol.~54, no.~5, pp. 3824--3851, 1996.

\bibitem{Calderbank1998}
A.~R. Calderbank, E.~M. Rains, P.~M. Shor, and N.~J.~A. Sloane, ``Quantum error
  correction via codes over {GF(4)},'' \emph{IEEE Transactions on Information
  Theory}, vol.~44, no.~4, pp. 1369--1387, 1998.

\bibitem{FengXing2008}
K.~Feng and C.~Xing, ``A new construction of quantum error-correcting codes,''
  \emph{Transactions of the American Mathematical Society}, vol. 360, no.~4,
  pp. 2007--2019, 2008.

\bibitem{WangXing2010}
L.~Wang, K.~Feng, S.~Ling, and C.~Xing, ``Asymmetric quantum codes:
  Characterization and constructions,'' \emph{IEEE Transactions on Information
  Theory}, vol.~56, no.~6, pp. 2938--2945, 2010.

\bibitem{Ioffe2007}
L.~Ioffe and M.~M\'{e}zard, ``Asymmetric quantum error-correcting codes,''
  \emph{Physical Review A}, vol.~75, no.~3, p. 032345, 2007.

\bibitem{Sarvepalli2008}
P.~K. Sarvepalli, M.~R\"{o}etteler, and A.~Klappenecker, ``Asymmetric quantum
  {LDPC} codes,'' in \emph{Proceedings of the 2008 IEEE International Symposium
  on Information Theory (ISIT 2008), Toronto, Canada}, 2008, pp. 305--309.

\bibitem{Fletcher2008Channel}
A.~S. Fletcher, P.~W. Shor, and M.~Z. Win, ``Channel-adapted quantum error
  correction for the amplitude damping channel,'' \emph{IEEE Transactions on
  Information Theory}, vol.~54, no.~12, pp. 5705--5718, 2008.

\bibitem{Kraus1983}
K.~Kraus, \emph{States, Effects, and Operations Fundamental Notions of Quantum
  Theory}.\hskip 1em plus 0.5em minus 0.4em\relax Berlin: Springer-Verlag,
  1983.

\bibitem{MacW1963}
F.~J. MacWilliams, ``A theorem on the distribution of weights in a systematic
  code,'' \emph{The Bell System Technical Journal}, vol.~42, no.~1, pp. 79--94,
  1963.

\bibitem{MacWilliams1977}
F.~J. MacWilliams and N.~J.~A. Sloane, \emph{The theory of error-correcting
  codes}.\hskip 1em plus 0.5em minus 0.4em\relax New York: North-Holland, 1977.

\bibitem{ShorLa1997}
P.~W. Shor and R.~Laflamme, ``Quantum analog of the {MacWilliams} identities in
  classical coding theory,'' \emph{Physical Review Letters}, vol.~78, no.~8,
  pp. 1600--1602, 1997.

\bibitem{Rain1998}
E.~M. Rains, ``Quantum weight enumerators,'' \emph{IEEE Transactions on
  Information Theory}, vol.~44, no.~4, pp. 1388--1394, 1998.

\bibitem{Rain1999}
------, ``Quantum shadow enumerators,'' \emph{IEEE Transactions on Information
  Theory}, vol.~45, no.~7, pp. 2361--2366, 1999.

\bibitem{Delsarte1973}
P.~Delsarte, ``An algebraic approach to the association schemes of coding
  theory,'' \emph{Philips Research Reports Supplements}, vol.~10, 1973.

\bibitem{Aaltonen1979}
M.~J. Aaltonen, ``Linear programming bounds for tree codes,'' \emph{IEEE
  Transactions on Information Theory}, vol.~25, no.~1, pp. 85--90, 1979.

\bibitem{Aaltonen1990}
------, ``A new upper bound on nonbinary block codes,'' \emph{Discrete
  Mathematics}, vol.~83, pp. 139--160, 1990.

\bibitem{Laihonen1998}
T.~Laihonen and S.~Litsyn, ``On upper bounds for minimum distance and covering
  radius of non-binary codes,'' \emph{Designs Codes and Cryptography}, vol.~14,
  pp. 71--80, 1998.

\bibitem{Levenshtein1995}
V.~I. Levenshtein, ``Krawtchouk polynomials and universal bounds for codes and
  designs in {Hamming} spaces,'' \emph{IEEE Transactions on Information
  Theory}, vol.~41, no.~5, pp. 1303--1321, 1995.

\bibitem{Knill1997A}
E.~Knill and R.~Laflamme, ``A theory of quantum error-correcting codes,''
  \emph{Physical Review A}, vol.~55, no.~2, pp. 900--911, 1997.

\bibitem{Cleve1997Quantum}
R.~Cleve, ``Quantum stabilizer codes and classical linear codes,''
  \emph{Physical Review A}, vol.~55, no.~6, pp. 4054--4059, 1997.

\bibitem{Ashikhmin1999}
A.~Ashikhmin and S.~Litsyn, ``Upper bounds on the size of quantum codes,''
  \emph{IEEE Transactions on Information Theory}, vol.~45, no.~4, pp.
  1206--1215, 1999.

\bibitem{Aly2008}
S.~A. Aly, ``Asymmetric and symmetric subsystem {BCH} codes and beyond,''
  \emph{arXiv:0803.0764}, 2008.

\bibitem{Sarvepalli2009}
P.~K. Sarvepalli, A.~Klappenecker, and M.~R\"{o}tteler, ``Asymmetric quantum
  codes: Constructions, bounds and performance,'' \emph{Proceedings of the
  Royal Society A}, vol. 465, pp. 1645--1672, 2009.

\bibitem{FengXing2006}
K.~Feng, S.~Ling, and C.~Xing, ``Asymptotic bounds on quantum codes from
  algebraic geometry codes,'' \emph{IEEE Transactions on Information Theory},
  vol.~52, no.~3, pp. 986--991, 2006.

\bibitem{vega2012weight}
G.~Vega, ``The weight distribution of an extended class of reducible cyclic
  codes,'' \emph{IEEE Transactions on Information Theory}, vol.~58, no.~7, pp.
  4862--4869, July 2012.

\bibitem{ding2009weight}
C.~Ding, ``The weight distribution of some irreducible cyclic codes.''
  \emph{IEEE Transactions on Information Theory}, vol.~55, no.~3, pp. 955--960,
  2009.

\bibitem{liYue2014weight}
C.~Li, Q.~Yue, and F.~Li, ``Weight distributions of cyclic codes with respect
  to pairwise coprime order elements,'' \emph{Finite Fields and Their
  Applications}, vol.~28, pp. 94--114, 2014.

\bibitem{yang2015weightenu}
S.~Yang, Z.~Yao, and C.~Zhao, ``\BIBforeignlanguage{English}{The weight
  enumerator of the duals of a class of cyclic codes with three zeros},''
  \emph{\BIBforeignlanguage{English}{Applicable Algebra in Engineering,
  Communication and Computing}}, vol.~26, no.~4, pp. 347--367, 2015.

\bibitem{Yang2017constru}
S.~Yang, X.~Kong, and C.~Tang, ``A construction of linear codes and their
  complete weight enumerators,'' \emph{Finite Fields and Their Applications},
  vol.~48, pp. 196--226, 2017.

\bibitem{Yang2016complete}
S.~Yang and Z.~Yao, ``Complete weight enumerators of a class of linear codes,''
  \emph{Discrete Mathematics}, vol. 340, pp. 729--739, 2017.

\bibitem{yang2015weight}
S.~Yang, Z.~Yao, and C.~Zhao, ``The weight distributions of two classes of
  $p$-ary cyclic codes with few weights,'' \emph{Finite Fields and Their
  Applications}, vol.~44, pp. 76--91, 2017.

\bibitem{McEliece1977}
R.~McEliece, E.~Rodemich, H.~Rumsey, and L.~Welch, ``New upper bounds on the
  rate of a code via the {Delsarte-MacWilliams} inequalities,'' \emph{IEEE
  Transactions on Information Theory}, vol.~23, no.~2, pp. 157--166, 1977.

\bibitem{Kalai1995}
G.~Kalai and N.~Linial, ``On the distance distribution of codes,'' \emph{IEEE
  Transactions on Information Theory}, vol.~41, no.~5, pp. 1467--1472, 1995.

\bibitem{Lai2018}
C.-Y. Lai and A.~Ashikhmin, ``Linear programming bounds for
  entanglement-assisted quantum error-correcting codes by split weight
  enumerators,'' \emph{IEEE Transactions on Information Theory}, vol.~64,
  no.~1, pp. 622--639, 2018.

\end{thebibliography}

\end{document}